\definecolor{labelkey}{cmyk}{.4,.2,0,0}
\newcommand{\be}{\begin{equation}}
\newcommand{\ee}{\end{equation}}
\newcommand{\bea}{\begin{eqnarray}}
\newcommand{\eea}{\end{eqnarray}}
\newcommand{\R}{\ensuremath{\mathbb{R}}}
\newcommand{\Z}{\ensuremath{\mathbb{Z}}}
\renewcommand{\rho}{\varrho}
\newcommand{\eps}{\varepsilon}
\renewcommand{\leq}{\leqslant}
\renewcommand{\geq}{\geqslant}
\renewcommand{\le}{\leqslant}
\renewcommand{\ge}{\geqslant}
\newcommand{\Gammainv}{\mathrm{Gamma}^{-1}}
\newcommand{\xx}{\mathbf x}
\newcommand{\yy}{\mathbf y}
\newcommand{\ZZ}{\mathbf{Z}}
\newcommand{\Zcal}{\mathcal{Z}}
\titleformat{\section}{\large\bf}{\thesection}{1em}{}
\titleformat{\subsection}[runin]{\bf}{\thesubsection}{1em}{}[.]
\titleformat{\subsubsection}[runin]{\it}{\thesubsubsection}{1em}{}[.]
\newtheorem{theorem}{Theorem}[section]
\newtheorem{lemma}[theorem]{Lemma}
\newtheorem{proposition}[theorem]{Proposition}
\theoremstyle{definition}
\theoremstyle{definition}
\newtheorem{definition}[theorem]{Definition}
\theoremstyle{definition}
\theoremstyle{definition}
\newtheorem{remark}{Remark}[section]
\author[1]{Guillaume Barraquand and Pierre Le Doussal}
\affil[1]{\normalsize Laboratoire de Physique de l'\'Ecole Normale Sup\'erieure, ENS, Universit\'e PSL, CNRS, Sorbonne Universit\'e, Universit\'e Paris-Cité, 75005 Paris, France}
\title{\bf \large A stationary model of non-intersecting directed polymers}
\date{}
\begin{document}

\maketitle

\begin{abstract} 
We consider the partition function $Z_{\ell}(\vec x,0\vert \vec y,t)$ of $\ell$ non-intersecting continuous directed polymers of length $t$ in dimension $1+1$, in a white noise environment, starting from positions $\vec x$ and terminating at positions $\vec y$. When $\ell=1$, it is well known that for fixed $x$, the field $\log Z_1(x,0\vert y,t)$ solves the Kardar-Parisi-Zhang equation and admits the Brownian motion as a stationary measure. In particular,  as $t$ goes to infinity, $Z_1(x,0\vert y,t)/Z_1(x,0\vert 0,t) $ converges to the exponential of a Brownian motion $B(y)$. In this article, we show an analogue of this result for any $\ell$. We show that $Z_{\ell}(\vec x,0\vert \vec y,t)/Z_{\ell}(\vec x,0\vert \vec 0,t) $ converges as $t$ goes to infinity to an explicit functional $Z_{\ell}^{\rm stat}(\vec y)$ of $\ell$ independent Brownian motions. This functional $Z_{\ell}^{\rm stat}(\vec y)$ admits a simple description as  the partition sum for $\ell$ non-intersecting semi-discrete polymers on $\ell$ lines. We discuss applications to the endpoints and midpoints distribution for long non-crossing polymers and derive explicit formula in the case of two polymers. To obtain these results, we show that the stationary measure of the O'Connell-Warren multilayer stochastic heat equation is given by a collection of independent Brownian motions.  This in turn is  shown via analogous results in a discrete setup for the so-called log-gamma polymer and exploit the connection between non-intersecting log-gamma polymers and the geometric RSK correspondence  found in \cite{corwin2014tropical}.

\end{abstract}

\setcounter{tocdepth}{1}
\tableofcontents

\subsection*{Motivation} 
Interacting directed elastic lines in presence of quenched point disorder appear in many physical systems in dimension $d=2+1$ and $d=1+1$. Examples are
pinned vortex phases in superconductors \cite{blatter1994vortices}, such as the Bragg glass and the vortex glass
\cite{doussal2011novel}, arrays of dislocation lines in solids \cite{moretti2004depinning}, surfaces of crystals with quenched disorder \cite{toner1990super}. In dimension $d=1+1$ the statistical mechanics is more tractable,
and experiments are possible \cite{bolle1999observation,einstein2014dynamical,einstein2007using}.
One can define a phase (or counting) field and, for generic interactions between the lines, the system can be described by
a Sine-Gordon model with a quenched random phase \cite{nattermann1991flux}. Being known as the Cardy-Ostlund  model \cite{cardy1982random},
it describes the classical 2D XY model with quenched random fields (and vortices excluded), and is
also related to random bond 2D dimer models \cite{zeng1996ground,zeng1999thermodynamics,bogner2004test}. 
Renormalization group and numerical studies of the Cardy-Ostlund model \cite{cardy1982random,toner1990super,hwa1994vortex,zeng1996ground,le2007disordered,perret2012super,ristivojevic2012super} 
showed that there are two phases, a high temperature rough phase and a low temperature super-rough glass phase. 
A particular case arises when the repulsive line interactions are replaced by non-crossing constraints. The 
pure system can then be modeled by free fermions \cite{gennes1968soluble,pokrovsky1979ground}.  
In presence of space-time disorder, non-interacting fermions are also believed to belong to the universality class of the Cardy-Ostlund model \cite{guruswamy2000gl,le2007disordered}. 

\medskip 
In another direction, the system of non-crossing lines in a random potential was studied more directly using the replica method and the Bethe ansatz.
Using a mapping of the problem onto $U(n)$ fermions with attractive interactions \cite{kardar1987replica}, 
the quenched-averaged free energy was calculated in \cite{kardar1987replica} from the continuation to $n=0$.
In \cite{emig2001probability,emig2000thermodynamic} the mesoscopic fluctuations of the free energy were 
predicted using an additional effective mapping to a weakly repulsive Bose gas.
Attempts to calculate the density correlations of the line lattice 
from the excitation spectrum of the $U(n)$ fermions
led to conflicting results \cite{balents1993system,tsvelik1992influence}. 
More recently, the nested Bethe ansatz was used to obtain non-crossing probabilities 
\cite{de2015crossing,de2016crossing} as well as the tail of the probability density function (PDF) of the free energy of
$N$ non-crossing lines \cite{de2017mutually}. These later works are based in a large part on 
recent progress in the mathematical literature to describe non-crossing continuous directed
polymers \cite{borodin2014macdonald} (see also \cite{o2016multi, johnston2020scaling}).

\section{Model and main results}
\label{sec:intro}

\subsection{The directed polymer model} Before describing the aim and main results of the present paper, let us recall the model of a single directed elastic line,
also called a directed polymer, in a random potential  in dimension $d=1+1$. For a polymer of length $t$, with endpoints in $x,y$, it is
defined by the partition sum 
\begin{equation}
\label{eq:defpointopointcontinuousDP}
Z(x,0\vert y,t) = \int_{x(0)=x}^{x(t)=y}  Dx(\tau) e^{-  \int_0^t dt \left( \frac{1}{2} (\frac{dx(\tau)}{d\tau})^2 - \eta(x(\tau),\tau)\right) }
= \mathrm{E} \left[ e^{ \int_0^t d\tau  \eta(W(\tau),\tau)} \right] p_t(x,y)
\end{equation}
where  in the last (equivalent) expression, $p_t(x,y) =\frac{e^{- \frac{(x-y)^2}{2 t} }}{\sqrt{2 \pi t}} $ is the heat kernel, and the expectation $\mathrm E$ is over standard Brownian bridges $W$, with endpoints 
$W(0)=x$ and $W(t)=y$. Here $\eta(x,\tau)$ is a Gaussian noise, that we assume to be white in  the (fictitious) time variable $\tau$.  So far, we do not make assumptions on its correlations in space, but we focus below on the case of delta correlations
in space, that is $\mathbb{E}[\eta(x,t),\eta(x',t')]=\delta(x-x')\delta(t-t')$, which
requires extra mathematical care (see Section \ref{sec:KPZlimit}). From the Feynman-Kac formula, the directed polymer partition sum $Z=Z(x,0\vert y,t)$ 
is the solution of the 
stochastic heat equation (SHE) in the It\^o sense
\begin{equation}
\partial_t Z=\frac 1 2 \partial_{yy} Z+ \eta Z 
\label{eq:SHE}
\end{equation}
with initial condition $Z(x,0\vert y,t=0)=\delta(y-x)$. 
The height field defined by (minus) the free energy of the polymer as a function of the endpoint position, that is $h(y,t) = \log Z(y,t)$, solves \cite{KPZ} the one-dimensional  Kardar-Parisi-Zhang (KPZ) equation 
(in that case with the so-called droplet initial condition). The KPZ equation describes the stochastic growth of an interface of
height $h(y,t)$ in presence of a noise $\eta$.

\medskip 
A well known property of the KPZ equation on the full line with space-time white noise $\eta$ is 
that it is associated to a Brownian stationary measure. More precisely, 
while the global height grows linearly with time with non trivial $t^{1/3}$ fluctuations, 
the height differences between any two points, will reach at large time a stationary distribution given by a Brownian motion.
This was pointed out long ago \cite{forster1977large,huse1985huse, parisi1990replica} in the physics literature. The fact that the Brownian measure is stationary is proved rigorously 
in \cite{bertini1995stochastic} (see also \cite{funaki2015kpz}). In terms of directed polymers, ratios of partition sums with different endpoints 
 converge in the limit of very long polymers. They have the same distribution as  ratios of the exponential of a Brownian motion, which we denote by  $Z_1^{\rm stat}(y)=e^{B(y)}$. More precisely, 
\begin{equation}\label{singleDP}
\lim_{t \to +\infty} \frac{Z(x,-t \vert y,0)}{Z(x,- t \vert z,0)} \overset{(d)}{=} \frac{Z_1^{\rm stat}(y)}{Z_1^{\rm stat}(z)} = e^{B(y)-B(z)}
\end{equation}
for any fixed $x$, where $B(y)$ is a standard Brownian motion (i.e. with $B(0)=0$ and $dB(y)^2=dy$). In \eqref{singleDP}, the limit on the left-hand-side should exist for almost every realization of the environment $\eta$, and the equality in distribution  holds as a process, that is jointly in $y$ and $z$ (see \cite[Theorem 1.11]{das2022localization} for a rigorous proof of the convergence in distribution). The particular realization of $B$ depends of course in a very non-trivial way on the random potential $\eta$. 

\medskip 
The process $Z_1^{\rm stat}(y)=e^{B(y)}$ describes the (unnormalized) endpoint measure for very long polymers.
Although it is not normalizable in the full space, we may apply an external force on the directed polymer endpoint, as can be done in experiments. Assuming the force derives from a sufficiently confining potential $U(y)$, the density of the polymer endpoint becomes proportional to $Z_1^{\rm stat}(y) e^{- \beta U(y)}$, which is now normalizable. Similarly, the probability measure of the {\it midpoint} of a very long polymer can be read from \eqref{singleDP} by considering 
$Z(x,-t \vert y,0) Z(y,0\vert x,t)$ for large $t$ and it is distributed as $e^{B(y)+ \tilde B(y)}$, where $B$ and $\tilde B$ are two independent standard Brownian motions.

\medskip 

\subsection{Non-intersecting directed polymers} A very natural question is to consider now $\ell$ continuum directed polymers constrained not to cross, 
in the same random
potential $\eta$, and ask what is the stationary measure. In this paper we obtain this stationary measure for the case of space-time white noise, in a form
which generalizes the result for a single polymer $\ell=1$, depending now on $\ell$ Brownian motions. This provides information on the behavior of the endpoint probabilities of very long non-crossing polymers, as well as their midpoint probabilities, as we discuss below. 
The result is obtained by considering a discrete version, the log gamma polymer model, recalled below, 
for which we also obtain novel results. 
We first summarize our main results in the continuum 
and explain later the general idea of the method, as well as the new results in the discrete setting.

\medskip 
To express our result in the next subsection, leaving aside mathematical subtleties for now, let us define $Z_{\ell}(\vec x,0 \vert \vec y ,t)$,
the partition function of $\ell$ non intersecting continuous polymers, with starting points 
at $\vec x=(x_1 < x_2<\dots <x_\ell)$ and ending at $\vec y=(y_1<y_2<\dots < y_\ell)$. It can be 
defined as the expectation over $\ell$ non-intersecting Brownian paths $W_j(\tau)$, $j=1,\dots,\ell$
starting from $W_j(0)= x_j$ and ending at $W_j(t)=y_j$,  
\begin{equation}
Z_{\ell}(\vec x,0 \vert \vec y ,t) = \det\left(  p_t(x_i,y_j) \right)_{i,j=1}^{\ell}\mathrm{E} \left[ e^{-  \int_0^t d\tau \sum_j \eta(W_j(\tau),\tau)} \right], 
\label{eq:defZl}
\end{equation} 
where the expectation $\mathrm E$ is taken over $W_1,\dots W_\ell$, $\ell$ non-intersecting Brownian bridges. The first factor corresponds to the standard formula for the propagator of $\ell$ non-intersecting Brownian motions. The partition function $Z_{\ell}(\vec x,0 \vert \vec y ,t)$  was considered in physics \cite{emig2001probability,de2015crossing,de2017mutually}
and is given by the Karlin Mc Gregor formula \cite{karlin1959coincidence}
\begin{equation}
Z_{\ell}(x_1, \dots,x_{\ell};0\vert y_1, \dots, y_{\ell};t) = \det\left(Z_1(x_i,0 \vert y_j,t)\right)_{i,j=1}^{\ell}.
\label{eq:defZKarlinMcGregor}
\end{equation}
From a mathematical point of view, \eqref{eq:defZKarlinMcGregor} can be taken as a definition of $Z_{\ell}$, which poses no issue even in the case of a white noise in space. 
The partition function $Z_{\ell}$ satisfies the stochastic partial differential equation
\begin{equation}
\partial_t Z_\ell= \sum_i \frac 1 2 \partial_{y_iy_i} Z_\ell +  Z_\ell  \sum_i \eta(y_i,t) 
\label{eq:SPDE}
\end{equation} 
on the Weyl chamber 
\begin{equation} 
\mathbb W_{\ell} = \left\lbrace \vec y\in \mathbb R^{\ell};  y_1<y_2<\dots < y_\ell\right\rbrace,
\label{eq:defWeyl}
\end{equation} 
with the boundary condition that $Z_\ell=0$ whenever any $y_j = y_{j+1}$. 
The formula \eqref{eq:defZl} is well defined for a smooth spatial noise correlator,  but there are some mathematical issues in defining the white noise
limit. We refer to \cite{o2016multi, corwin2017intermediate, chandra2019local} for a more precise  mathematical discussion on the well-posedness of such stochastic PDEs.

\subsection{Main results}
\label{sec:mainresults}
 We can now state our main result for the continuum model described in the previous section. We find that the generalization of \eqref{singleDP} to $\ell$ non-crossing directed polymers is the following. For $\vec x, \vec y, \vec z \in \mathbb W_{\ell}$, 
\begin{equation}
\lim_{t \to +\infty} 
 \frac{ Z_{\ell}(\vec x;-t\vert \vec y ;0) }{Z_{\ell}(\vec x;-t\vert \vec z ;0) } \overset{(d)}{=}
\frac{Z^{\rm stat}_\ell(\vec y)}{Z^{\rm stat}_\ell(\vec z)} 
\label{eq:multipolymers}
\end{equation}
where the partition function $Z^{\rm stat}_\ell$ is now defined by 
\begin{equation}
Z^{\rm stat}_\ell(\vec y)= \int_{GT(\vec y)} \prod_{k=1}^{\ell} \prod_{i=1}^{k} e^{
	 B_{\ell-k+1}(z_i^{k})- B_{\ell-k+1}(z_{i-1}^{k-1})}\prod_{k=1}^{\ell-1} \prod_{i=1}^{k} \mathrm d z_i^{k},
	 \label{eq:defZstat}
\end{equation}
where the $B_k(z)$, $k=1,\dots,\ell$ are $\ell$ independent standard Brownian motions (we will assume for convenience that $B_k(0)=0$, although it does not matter since   \eqref{eq:defZstat} only involves increments of the Brownian motions). We use the
convention that $z_0^k=0$ for $0\le k\le \ell$, and the integration is performed on the interlaced set of $\ell(\ell-1)/2$ independent auxiliary variables (which form a Gelfand-Tsetlin pattern) 
\begin{equation} GT(\vec y)  = \lbrace (z_i^k)_{1\le i\le k\le \ell }: z_{i}^{k+1}\le z_i^k \le z_{i+1}^{k+1}\text{ for }1\le i\le k\le \ell-1, \text{ and } z_i^{\ell}=y_i\text{ for }1\le i\le \ell  \rbrace.
\label{eq:defGT}
\end{equation} 
Note that for any $x\leq y_1$, 
\begin{equation} 
Z_{\ell}^{\rm stat}(\vec y) = e^{\sum_{i=1}^{\ell} B_i(x)}Z_{\ell}^{\rm stat}(\vec y-x\vec 1).
\label{eq:translation}
\end{equation}

\begin{remark} 
\label{rem:t2/3scale}
We expect that as $t$ goes to infinity, 
\begin{equation}
 \frac{ Z_{\ell}(\vec x;-t\vert \vec y ;0) }{Z_{\ell}(\vec x;-t\vert \vec z ;0) } \overset{(d)}{\simeq}
\frac{Z^{\rm stat}_\ell(\vec y)}{Z^{\rm stat}_\ell(\vec z)} 
\end{equation}
  holds not only for fixed $\vec y$ and $\vec z$, but as long as $ \vert y_i -z_j \vert \ll t^{2/3}$ for all $1\leq i,j\leq \ell$. 
\end{remark}

\subsection{Outline}
 In the remainder of this Section, we discuss a  graphical interpretation of the quantity $Z_{\ell}^{\rm stat}(\vec y)$ that we can view as the partition function of en ensemble of semi-discrete directed polymers. We also discuss generalizations of our main result \eqref{eq:multipolymers} and applications to the midpoint and the  endpoint distribution of $n$ non-intersecting polymers.  

In Section \ref{sec:discretecase}, we consider a special exactly solvable model of discrete directed polymers on the $\mathbb Z^2$ lattice (originally introduced in \cite{seppalainen2012scaling}) with inverse gamma distributed weights. We state and prove a discrete analogue of our main result using a connection, developed in \cite{corwin2014tropical},  between polymer partition functions and the geometric RSK correspondence. Then, in Section \ref{sec:KPZlimit}, we take the continuous limit of the results from Section \ref{sec:discretecase} to derive our main result \eqref{eq:multipolymers} and the other results given below in Section \ref{sec:intro}. We provide actually two routes leading to the same result: in the first derivation, we start finding the (simpler) stationary measure of an object called the multilayer stochastic heat equation introduced in \cite{o2016multi} (Section \ref{sec:invariantOConnellWarren}). We then deduce the stationary measure of the stochastic PDE \eqref{eq:SPDE} using a remarkable relation from \cite{o2016multi} between the partition functions of non-intersecting polymers with distinct endpoints and the multilayer stochastic heat equation. In the second derivation, we do not appeal to \cite{o2016multi} but use a discrete analogue of \eqref{eq:multipolymers}, and we take a scaling limit. 

\subsection{Graphical interpretation, generalizations and applications} 
\subsubsection{Graphical interpretation} For $\ell=2$ one has, with $z=z_1^1$ and $y_1<y_2$
\begin{equation}\label{Z2n}
Z^{\rm stat}_2(y_1,y_2)= e^{B_1(y_1)+B_1(y_2)} \int_{y_1}^{y_2} dz e^{-B_1(z) + B_2(z)},
\end{equation}
which can be seen as the partition function of two non-intersecting semi-discrete polymers as shown in Fig. \ref{fig:OY}. The polymer paths live on two horizontal lines. The two polymers start at the horizontal coordinate $0$, and end on the second line at horizontal coordinates $y_1$ and $y_2$ respectively. The energy collected by a given polymer path is the sum of increments of two independent standard Brownian motions $B_1$ and $B_2$ along each horizontal line.  

\medskip 
For $\ell=3$ and for $y_1<y_2<y_3$ one has
\begin{multline} \label{Z3n}
 Z^{\rm stat}_3(y_1,y_2,y_3) = e^{B_1(y_1)+B_1(y_2)+B_1(y_3)} \\
\times \int_{y_1<z^2_1<y_2<z^2_2<y_3} dz^2_1 dz^2_2 
e^{B_2(z_1^2)-B_1(z_1^2) + B_2(z_2^2)-B_1(z_2^2)} 
\int_{z^2_1 < z^1_1 < z^2_2} dz^1_1 e^{B_3(z_1^1)- B_2(z_1^1) }.
\end{multline}
This can also be interpreted as the partition function of three non-intersecting semi-discrete polymers as shown in Fig. \ref{fig:OY} (right).

\medskip 
More generally, for any $\ell$, $Z_\ell^{\rm stat}(\vec y)$ is the partition function of $\ell$ non-intersecting semi-discrete polymers on $\ell$ horizontal lines, indexed from top to bottom,  weighted by independent standard Brownian motions $B_1, \dots, B_{\ell}$. The polymer paths all start from the horizontal coordinate $0$, and end on the first line at locations $y_1, \dots, y_{\ell}$. Such a semi-discrete polymer model has been well-studied, starting from \cite{o2001brownian}, and is generally referred to as the O'Connell-Yor polymer model.  In earlier references, consideration was restricted to the case where semi-discrete non-intersecting polymers arrive all at the same point. It seems remarkable that partition functions of the semi-discrete O'Connell-Yor polymer describe the stationary measure of non-intersecting Brownian polymers.

 Note that we have represented in Fig. \ref{fig:OY} the case where all $y_i>0$ but this is sufficient, since the
origin can be chosen arbitrarily far to the left, as it cancels in the partition function ratio.

\begin{figure}
	\centering
\begin{center}	
	\begin{tikzpicture}
	\begin{scope}
	\draw[gray] (0,0) -- (6,0) node[anchor=west] {$B_2$};
\draw[gray] (0,1) -- (6,1) node[anchor=west] {$B_1$};
\fill (3,0) circle(0.05);
\draw (3,0) node[anchor=north] {$z_{1}^1$};
\draw (2,0.9) -- (2,1.1) node[anchor=south] {$y_1$};
\draw (5,0.9) -- (5,1.1) node[anchor=south] {$y_2$};
\draw[->] (3,0) -- (3,0.5);
\draw (3,0) -- (3,1);
\draw[ultra thick] (0,0) -- (3,0);
\draw[ultra thick] (0,1) -- (2,1);
\draw[ultra thick] (3,1) -- (5,1);
\draw (0,-0.1) node[anchor=north] {$0$} -- (0,0.1);
\draw (0,0.9) node[anchor=north] {$0$} -- (0,1.1);
	\end{scope}
	\begin{scope}[xshift=7.5cm, yshift=-0.5cm]
\draw[gray] (0,0) -- (7,0) node[anchor=west] {$B_3$};
\draw[gray] (0,1) -- (7,1) node[anchor=west] {$B_2$};
\draw[gray] (0,2) -- (7,2) node[anchor=west] {$B_1$};
\draw (3,0) node[anchor=north] {$z_1^1$};
\draw (2,1) node[anchor=north] {$z_1^2$};
\draw (5,1) node[anchor=north] {$z_2^2$};
\draw (4,1.9) -- (4,2.1) node[anchor=south] {$y_2$};
\draw (6,1.9) -- (6,2.1) node[anchor=south] {$y_3$};
\draw (1,1.9) -- (1,2.1) node[anchor=south] {$y_1$};
\draw[->] (3,0) -- (3,0.5);
\draw (3,0) -- (3,1);
\draw[->] (2,1) -- (2,1.5);
\draw (2,1) -- (2,2);
\draw[->] (5,1) -- (5,1.5);
\draw (5,1) -- (5,2);
\draw[ultra thick] (0,0) -- (3,0);
\draw[ultra thick] (0,1) -- (2,1);
\draw[ultra thick] (3,1) -- (5,1);
\draw[ultra thick] (0,2) -- (1,2);
\draw[ultra thick] (2,2) -- (4,2);
\draw[ultra thick] (5,2) -- (6,2);
\draw (0,-0.1) node[anchor=north] {$0$} -- (0,0.1);
\draw (0,0.9) node[anchor=north] {$0$} -- (0,1.1);
\draw (0,1.9) node[anchor=north] {$0$} -- (0,2.1);
	\end{scope}
	\end{tikzpicture}
	\end{center}
	\caption{The partition function $Z_{\ell}^{\rm stat}(\vec y) $ defined in \eqref{eq:defZstat} can be interpreted as the partition sum of non-crossing semi-discrete polymers. We have depicted the case $\ell=2$ (left) and $\ell=3$ (right).}
	\label{fig:OY}
\end{figure}
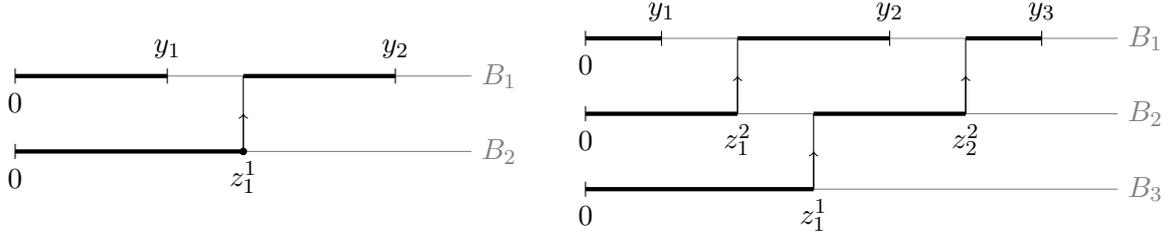

\medskip 
Since, as we just discussed, $Z_{\ell}^{\rm stat}(\vec y)$ can be seen as the partition function of $\ell$ non-intersecting polymers, the Karlin-McGregor formula yields the alternative formula 
\begin{equation} \label{KMGOC} 
Z_{\ell}^{\rm stat}(\vec y) = \det \left( Z^{\rm OY}[(0,i)\vert (y_j, 1)] \right)_{1\leqslant i,j\leqslant \ell},
\end{equation}
where $Z^{\rm OY}[(0,i)\vert (y, 1)]$ is the partition function for a single semi-discrete polymer starting from horizontal coordinate $0$ on the $i$-the line (lines are indexed from top to bottom) and ending on the first line at horizontal coordinate $y$ (see Section \ref{sec:OYdetails} below for more details about the semi-discrete O'Connell-Yor polymer mode, see in particular  \eqref{eq:defZOY} for an explicit expression defining $Z^{\rm OY}[(0,i)\vert (y_j, 1)$).

\begin{remark}
The quantity $Z_{\ell}^{\rm stat}$ defined in \eqref{eq:defZstat}  satisfies a surprising identity. For any realization of the Brownian motions $B_1, \dots, B_{\ell}$,  $Z_{\ell}^{\rm stat}$ remains unchanged if we replace the Brownian motions $B_1, \dots, B_{\ell}$ by a family of processes $\mathcal WB_1, \dots, \mathcal WB_{\ell}$ built from the Brownian motions $B_1, \dots, B_{\ell}$ (this process is called the melon of the Brownian motions $B_1, \dots, B_{\ell}$ in \cite{dauvergne2018directed}, where a similar invariance plays a crucial role in the construction of the Airy sheet). We explain the details of this construction in Section \ref{sec:identity}, based on results from \cite{corwin2020invariance}. Remarkably, the distribution of the process $(\mathcal WB_1, \dots, \mathcal WB_{\ell})$ converges at large scale to non-intersecting Brownian motions, that is to a GUE Dyson Brownian motion $(\Lambda_1(y), \dots, \Lambda_{\ell}(y))$. It remains to be seen whether these observations can help analyzing the distribution of $Z_{\ell}^{\rm stat}$.
\end{remark}

\subsubsection{Generalizations} 
\label{sec:generalizations}
We may consider the case when both endpoints and starting points vary.  In that case, we obtain 
\begin{equation}
\lim_{t \to +\infty} 
 \frac{ Z_{\ell}(\vec x;-t\vert \vec y ;0) }{Z_{\ell}(\vec r;-t\vert \vec z ;0) } \overset{(d)}{=}
\frac{Z^{\rm stat}_\ell(\vec y)}{Z^{\rm stat}_\ell(\vec z)}  \frac{\tilde Z^{\rm stat}_\ell(\vec x)}{\tilde Z^{\rm stat}_\ell(\vec r)}, 
\label{eq:multipolymersbothends}
\end{equation}
where $Z_{\ell}^{\rm stat}$ is defined with respect to Brownian motions $B_1, \dots, B_{\ell}$ and $\tilde Z_{\ell}^{\rm stat}$ is defined with respect to another independent set of  Brownian motions $\tilde B_1, \dots, \tilde B_{\ell}$.

It is also natural to consider now very long non intersecting polymers 
which come from different directions, i.e. with starting points at time $-t$ equal to $x_i=a_i t$, where $a_1\le  \dots \le a_{\ell}$.
In that case we conjecture that \footnote{when all drifts are identical, $a_i=a$, this can be deduced by taking a continuous limit of the results in Section \ref{sec:sendingwithanangle}.}
\begin{equation}
\lim_{t \to +\infty} 
 \frac{ Z_{\ell}(\vec x;-t\vert \vec y ;0) }{Z_{\ell}(\vec x;-t\vert \vec z ;0) } = 
\frac{Z^{\rm stat}_\ell(\vec y; \vec a)}{Z^{\rm stat}_\ell(\vec z; \vec a)} 
\label{eq:mainresultwithdrifts}
\end{equation}
where $Z^{\rm stat}_\ell(\vec y; \vec a)$ is defined as in \eqref{eq:defZstat} except that now the Brownian motion $B_i$ have drifts $a_{i}$. We also conjecture that the processes $Z^{\rm stat}_\ell(\vec y; \vec a)$ constitute the set of all extremal stationary measures of the stochastic PDE \eqref{eq:SPDE}.

\subsubsection{Limits} 

There are two natural limits of the stationary partition function $Z^{\rm stat}_\ell(\vec y)$ to consider, corresponding
respectively to its short distance behavior (dense limit) and its large distance behavior (dilute limit).

Consider the short distance limit where all the $y_j$ are almost equal, $y_j \approx y$. In that case 
one can approximate all the Brownian motions $B_i(z_i^\ell)$  in \eqref{eq:defZstat} by $B_i(y)$, and one gets 
\be 
Z^{\rm stat}_\ell(\vec y) \simeq \prod_{i=1}^{\ell}\frac{e^{ B_i(y)}}{(i-1)!} \Delta(\vec y)
\ee
where $\Delta(\vec y):=\prod_{i<j}( y_j-y_i)$,  as easily obtained from \eqref{KMGOC} using \eqref{eq:translation} and the fact that 
$Z^{\rm OY}[(0,i)\vert (y_j-y_1, 1)]\simeq \frac{(y_j-y_1)^{i-1}}{(i-1)!}$. 
The Vandermonde determinant $\Delta(\vec y)$ is indeed the equilibrium measure for $\ell$ non-crossing Brownian motions (i.e. directed polymers in the absence of
random potential). Equivalently, the Vandermonde is a harmonic function for the Laplacian with vanishing  boundary condition on the border of the Weyl chamber $\mathbb W_{\ell}$.

Consider now the large distance limit, that is the case where all $y_i-y_j \gg 1$. The integrals of exponentials in \eqref{eq:defZstat} are dominated by the maximum over the Gelfand-Tsetlin pattern. More precisely one has
\be 
    \lim_{x\to\infty} \frac{1}{\sqrt{x}} \log Z^{\rm stat}_{\ell}(x\vec y) = \sup_{z \in GT(\vec y)} \left\lbrace \sum_{k=1}^{\ell}\sum_{i=1}^{k} B_{\ell-k+1}(z_i^k)- B_{\ell-k+1}(z_{i-1}^{k-1})  \right\rbrace.
    \label{eq:supremum}
\ee

\subsubsection{The case of two non-intersecting polymers} When $\ell=2$, \eqref{eq:defZstat} can be rewritten as, see \eqref{Z2n}
\be \label{Z2stat}
Z^{\rm stat}_2(y_1,y_2) = e^{B_1(y_1)+B_2(y_1)} Z^{\rm OY}[(y_1,2) \vert (y_2,1)]  
\ee 
where $Z^{\rm OY}((y_1,2) \vert (y_2,1))$ is, again, a semi-discrete polymer partition function (see Fig. \ref{fig:OY} (left)). The random variable $B_1(y_1)+B_2(y_1)$ is independent from  $Z^{\rm OY}[(y_1,2) \vert (y_2,1)]$, whose distribution depends only on the difference $y_2-y_1$, that is 
\begin{equation} \label{2pol} 
    Z^{\rm OY}((y_1,2) \vert (y_2,1)) \overset{(d)}{=} Z^{\rm OY}[(0,2) \vert (y_2-y_1,1)].
\end{equation}
The law of $Z^{\rm OY}[(0,2) \vert (y_2-y_1,1)]$ is known very explicitly, its Laplace transform and its density are given 
respectively in Corollary 4.2 and in Theorem 5.1 of \cite{o2012directed}.
This allows to obtain the law of $Z^{\rm stat}_2(y_1,y_2)$. Moreover the multipoint 
correlations of $Z^{\rm OY}[(0,2) \vert (y,1)]$  are also known \cite[Section 5.2.2]{borodin2014macdonald}. 
At large distance (dilute limit), one has 
\be
\lim_{x\to\infty }\frac{1}{\sqrt{x}}\log Z^{\rm stat}_{2}(xy_1,xy_2)  \overset{(d)}{=} 
B_1(y_2) + B_1(y_1) + 
\max_{z \in [y_1,y_2]} \left\lbrace  B_2(z) - B_1(z) \right\rbrace ,
\label{eq:ell=2zerotemperature}
\ee 
where $B_1$ and $B_2$ are independent Brownian motions. 
As a process in $y_2$, for fixed $y_1$, it is also equal in law to 
\be 
B_1(y_1)+B_2(y_1) + \Lambda_1(y_2-y_1) 
\ee 
where $\Lambda_1(y)$ is independent of $B_{1,2}(y_1)$ and has the distribution of the largest eigenvalue of the $\mathrm{GUE}(2)$ Dyson Brownian motion (see \eqref{eq:zerotemperature} for a precise definition). This can be seen as the zero-temperature limit of \eqref{Z2stat}. 
 In Section \ref{sec:applications} below, we give an application of \eqref{Z2stat} where some explicit calculations are possible.

\begin{remark} The stochastic heat equation \eqref{eq:SHE} on $\mathbb R_+$ also admits stationary distributions, studied 
in \cite{barraquand2021steady} (see also \cite{barraquand2022stationary}), depending on two parameters, the boundary parameter $u$ and the drift at infinity $-v$. Denoting by $Z^{hs}$ the solution in the case of Dirichlet boundary conditions 
(which corresponds to $u \to +\infty$ in \cite{barraquand2021steady}) and droplet initial condition ($v=+\infty$)
$Z^{hs}(x,0\vert y,t=0)=\delta(y-x)$, the associated stationary measure for the polymer partition function  in half space is given by an O'Connell-Yor polymer partition function (this can be seen from Eq. (34) in \cite{barraquand2021steady}). In particular, using the relation \eqref{Z2stat} 
we obtain that 
\begin{equation}\label{singleDPhalfspace}
\lim_{t \to +\infty} \frac{Z^{hs}(x,-t \vert y,0)}{Z^{hs}(x,- t \vert z,0)} \overset{(d)}{=} \frac{Z_2^{\rm stat}(0,y)}{Z_2^{\rm stat}(0,z)}
\end{equation}
Hence there is a relation between the steady-state of the partition function of a single polymer in a half-space and the steady-state of the partition function of two 
non-intersecting polymers in full space. 
Moreover, for $v<0$, $Z_2^{\rm stat}(0,y; v, -v)$ (recall that $Z_{\ell}^{\rm stat}(\vec y; \vec a)$ was defined in Section \ref{sec:generalizations}) is also a stationary measure (with drift $-v$) for the polymer partition function in half space and Dirichlet boundary condition (see Eq. (35) in \cite{barraquand2021steady}). We conjecture that for $x=-vt$, 
\begin{equation}\label{singleDPhalfspacewithdrift}
\lim_{t \to +\infty} \frac{Z^{hs}(x,-t \vert y,0)}{Z^{hs}(x,- t \vert z,0)} \overset{(d)}{=} \frac{Z_2^{\rm stat}(0,y; v, -v)}{Z_2^{\rm stat}(0,z; v,-v)}.
\end{equation}
\end{remark}

\subsection{Midpoint} 

Consider $\ell$ very long non-crossing polymers and the positions of their midpoints $y_1, \dots y_\ell$ (see Figure \ref{fig:midpoint}).
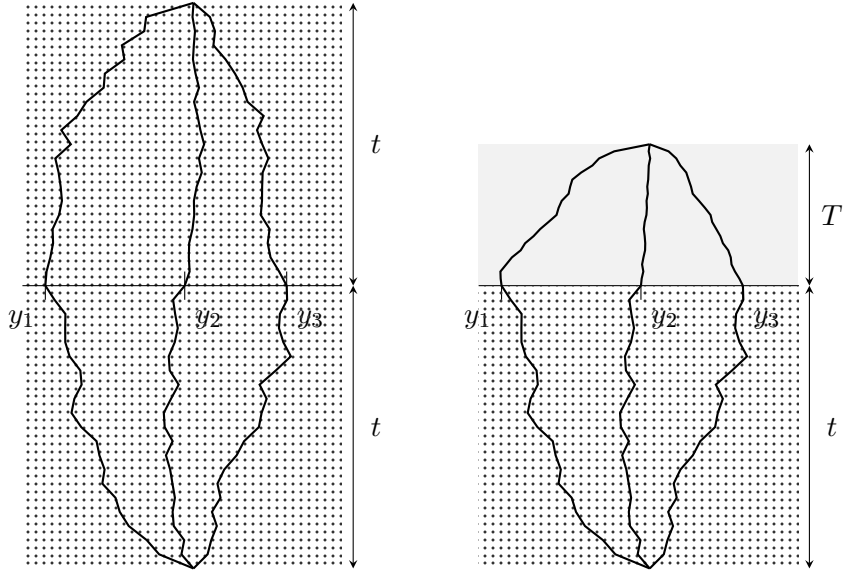
\begin{figure}
    \centering
     \begin{tikzpicture}[scale=1.5]
\begin{scope}[yscale=2.5] 
\draw (-1.2991,1.05) -- (-1.2991,0.95) node[anchor=north east]{$y_1$};
\draw (-0.0769828,1.05) -- (-0.0769828,0.95) node[anchor=north west]{$y_2$};
\draw (0.816921,1.05) -- (0.816921,0.95) node[anchor=north west]{$y_3$};

\fill[pattern=dots, pattern color=black!80] (-1.5,1) --(-1.5,0) -- (1.3,0) -- (1.3,1) -- cycle;
\fill[pattern=dots, pattern color=black!80] (-1.5,1) --(-1.5,2) -- (1.3,2) -- (1.3,1) -- cycle;

\draw[stealth-stealth] (1.4,1) -- (1.4,2);
\draw (1.6,1.5) node{$t$};

\draw[stealth-stealth] (1.4,1) -- (1.4,0);
\draw (1.6,0.5) node{$t$};

\draw (-1.5,1) -- (1.3,1);

\draw[thick] ((0.,0.) -- 
(-0.30143,0.05) -- 
(-0.409552,0.1) -- 
(-0.567782,0.15) -- 
(-0.646555,0.2) -- 
(-0.684085,0.25) -- 
(-0.799482,0.3) -- 
(-0.779747,0.35) -- 
(-0.825065,0.4) -- 
(-0.850036,0.45) -- 
(-0.984687,0.5) -- 
(-1.07001,0.55) -- 
(-1.04471,0.6) -- 
(-0.981216,0.65) -- 
(-0.992227,0.7) -- 
(-1.08658,0.75) -- 
(-1.12445,0.8) -- 
(-1.12515,0.85) -- 
(-1.12334,0.9) -- 
(-1.21874,0.95) -- 
(-1.2991,1.);
\draw[thick] (0,0.) -- 
(-0.105995,0.05) -- 
(-0.0813379,0.1) -- 
(-0.172984,0.15) -- 
(-0.183264,0.2) -- 
(-0.163246,0.25) -- 
(-0.184516,0.3) -- 
(-0.205019,0.35) -- 
(-0.241569,0.4) -- 
(-0.182845,0.45) -- 
(-0.252675,0.5) -- 
(-0.26036,0.55) -- 
(-0.19732,0.6) -- 
(-0.129879,0.65) -- 
(-0.208485,0.7) -- 
(-0.217275,0.75) -- 
(-0.164383,0.8) -- 
(-0.137803,0.85) -- 
(-0.160996,0.9) -- 
(-0.178837,0.95) -- 
(-0.0769828,1.);
\draw[thick] (0.,0.) -- 
(0.123397,0.05) -- 
(0.155497,0.1) -- 
(0.204339,0.15) -- 
(0.161931,0.2) -- 
(0.23662,0.25) -- 
(0.209985,0.3) -- 
(0.269903,0.35) -- 
(0.378128,0.4) -- 
(0.443318,0.45) -- 
(0.573329,0.5) -- 
(0.596939,0.55) -- 
(0.641084,0.6) -- 
(0.57959,0.65) -- 
(0.721362,0.7) -- 
(0.85343,0.75) -- 
(0.787414,0.8) -- 
(0.749207,0.85) -- 
(0.757518,0.9) -- 
(0.820677,0.95) -- 
(0.816921,1.);

\draw[thick] (0.,2.) -- 
(-0.415484,1.95) -- 
(-0.431754,1.9) -- 
(-0.627879,1.85) -- 
(-0.606328,1.8) -- 
(-0.777359,1.75) -- 
(-0.787515,1.7) -- 
(-0.939009,1.65) -- 
(-1.01841,1.6) -- 
(-1.15911,1.55) -- 
(-1.07556,1.5) -- 
(-1.2118,1.45) -- 
(-1.18277,1.4) -- 
(-1.16474,1.35) -- 
(-1.15354,1.3) -- 
(-1.17879,1.25) -- 
(-1.23552,1.2) -- 
(-1.2312,1.15) -- 
(-1.25531,1.1) -- 
(-1.28855,1.05) -- 
(-1.2991,1.);
\draw[thick]  (0.,2.) -- 
(-0.00865224,1.95) -- 
(0.0131024,1.9) -- 
(0.00270302,1.85) -- 
(-0.0047464,1.8) -- 
(0.0176092,1.75) -- 
(0.0482578,1.7) -- 
(0.00467315,1.65) -- 
(0.0357554,1.6) -- 
(0.0602152,1.55) -- 
(0.0930976,1.5) -- 
(0.0478634,1.45) -- 
(0.0724006,1.4) -- 
(0.0276482,1.35) -- 
(0.00359049,1.3) -- 
(0.00624363,1.25) -- 
(-0.00894476,1.2) -- 
(-0.0381336,1.15) -- 
(-0.0409067,1.1) -- 
(-0.0342688,1.05) -- 
(-0.0769828,1.);
\draw[thick]  (0.,2.) -- 
(0.143668,1.95) -- 
(0.18849,1.9) -- 
(0.174609,1.85) -- 
(0.281322,1.8) -- 
(0.363575,1.75) -- 
(0.431096,1.7) -- 
(0.473896,1.65) -- 
(0.615144,1.6) -- 
(0.556915,1.55) -- 
(0.602463,1.5) -- 
(0.656343,1.45) -- 
(0.606201,1.4) -- 
(0.601749,1.35) -- 
(0.600422,1.3) -- 
(0.58968,1.25) -- 
(0.667141,1.2) -- 
(0.650518,1.15) -- 
(0.675242,1.1) -- 
(0.750579,1.05) -- 
(0.816921,1.);
\end{scope}

\begin{scope}[xshift=4cm, yscale=2.5]

\draw (-1.2991,1.05) -- (-1.2991,0.95) node[anchor=north east]{$y_1$};
\draw (-0.0769828,1.05) -- (-0.0769828,0.95) node[anchor=north west]{$y_2$};
\draw (0.816921,1.05) -- (0.816921,0.95) node[anchor=north west]{$y_3$};

\fill[pattern=dots, pattern color=black!80] (-1.5,1) --(-1.5,0) -- (1.3,0) -- (1.3,1) -- cycle;
\fill[black!05!] (-1.5,1) --(-1.5,1.5) -- (1.3,1.5) -- (1.3,1) -- cycle;

\draw (-1.5,1) -- (1.3,1);
\draw[stealth-stealth] (1.4,1) -- (1.4,1.5);
\draw (1.6,1.25) node{$T$};

\draw[stealth-stealth] (1.4,1) -- (1.4,0);
\draw (1.6,0.5) node{$t$};

\draw[thick] ((0.,0.) -- 
(-0.30143,0.05) -- 
(-0.409552,0.1) -- 
(-0.567782,0.15) -- 
(-0.646555,0.2) -- 
(-0.684085,0.25) -- 
(-0.799482,0.3) -- 
(-0.779747,0.35) -- 
(-0.825065,0.4) -- 
(-0.850036,0.45) -- 
(-0.984687,0.5) -- 
(-1.07001,0.55) -- 
(-1.04471,0.6) -- 
(-0.981216,0.65) -- 
(-0.992227,0.7) -- 
(-1.08658,0.75) -- 
(-1.12445,0.8) -- 
(-1.12515,0.85) -- 
(-1.12334,0.9) -- 
(-1.21874,0.95) -- 
(-1.2991,1.);
\draw[thick] (0,0.) -- 
(-0.105995,0.05) -- 
(-0.0813379,0.1) -- 
(-0.172984,0.15) -- 
(-0.183264,0.2) -- 
(-0.163246,0.25) -- 
(-0.184516,0.3) -- 
(-0.205019,0.35) -- 
(-0.241569,0.4) -- 
(-0.182845,0.45) -- 
(-0.252675,0.5) -- 
(-0.26036,0.55) -- 
(-0.19732,0.6) -- 
(-0.129879,0.65) -- 
(-0.208485,0.7) -- 
(-0.217275,0.75) -- 
(-0.164383,0.8) -- 
(-0.137803,0.85) -- 
(-0.160996,0.9) -- 
(-0.178837,0.95) -- 
(-0.0769828,1.);
\draw[thick] (0.,0.) -- 
(0.123397,0.05) -- 
(0.155497,0.1) -- 
(0.204339,0.15) -- 
(0.161931,0.2) -- 
(0.23662,0.25) -- 
(0.209985,0.3) -- 
(0.269903,0.35) -- 
(0.378128,0.4) -- 
(0.443318,0.45) -- 
(0.573329,0.5) -- 
(0.596939,0.55) -- 
(0.641084,0.6) -- 
(0.57959,0.65) -- 
(0.721362,0.7) -- 
(0.85343,0.75) -- 
(0.787414,0.8) -- 
(0.749207,0.85) -- 
(0.757518,0.9) -- 
(0.820677,0.95) -- 
(0.816921,1.);

\draw[thick] (0.,1.5) -- 
(-0.311098,1.475) -- 
(-0.443875,1.45) -- 
(-0.502154,1.425) -- 
(-0.601156,1.4) -- 
(-0.678613,1.375) -- 
(-0.697669,1.35) -- 
(-0.711061,1.325) -- 
(-0.771613,1.3) -- 
(-0.799025,1.275) -- 
(-0.817052,1.25) -- 
(-0.887929,1.225) -- 
(-0.960719,1.2) -- 
(-1.01865,1.175) -- 
(-1.09263,1.15) -- 
(-1.13115,1.125) -- 
(-1.20828,1.1) -- 
(-1.25831,1.075) -- 
(-1.30908,1.05) -- 
(-1.30695,1.025) -- 
(-1.2991,1.);
\draw[thick]  (0.,1.5) -- 
(-0.00897148,1.475) -- 
(0.00391553,1.45) -- 
(-0.00861495,1.425) -- 
(-0.0108941,1.4) -- 
(-0.0133113,1.375) -- 
(-0.021714,1.35) -- 
(-0.0165484,1.325) -- 
(-0.0278833,1.3) -- 
(-0.0247638,1.275) -- 
(-0.0289309,1.25) -- 
(-0.0443874,1.225) -- 
(-0.0520282,1.2) -- 
(-0.0441431,1.175) -- 
(-0.0393319,1.15) -- 
(-0.0407934,1.125) -- 
(-0.0549951,1.1) -- 
(-0.0669207,1.075) -- 
(-0.0561042,1.05) -- 
(-0.0706213,1.025) -- 
(-0.0769828,1.);
\draw[thick]  (0.,1.5) -- 
(0.165031,1.475) -- 
(0.232699,1.45) -- 
(0.26819,1.425) -- 
(0.320383,1.4) -- 
(0.340936,1.375) -- 
(0.362046,1.35) -- 
(0.376879,1.325) -- 
(0.43171,1.3) -- 
(0.484564,1.275) -- 
(0.533122,1.25) -- 
(0.54767,1.225) -- 
(0.583816,1.2) -- 
(0.619365,1.175) -- 
(0.641514,1.15) -- 
(0.688169,1.125) -- 
(0.710399,1.1) -- 
(0.717164,1.075) -- 
(0.74725,1.05) -- 
(0.785786,1.025) -- 
(0.816921,1.);

\end{scope}

\end{tikzpicture}
\caption{Left: Midpoint distribution of $\ell$ non-intersecting polymers of length $2t$. Right: Endpoint distribution of $\ell$ non-intersecting polymers in a confining potential. It can also be seen as the midpoint distribution for polymers which in the lower (dotted) region are non-intersecting and
in a white noise environment, while in the upper (shaded) region there is no noise and they are simply Brownian paths 
 (with or without the non-crossing constraint)}
    \label{fig:midpoint}
\end{figure}
The stationary partition sum for the midpoints is simply the product
\be \label{eq:ZstatMidpoint}
Z_\ell^{\rm midpoint}(\vec y) := Z^{\rm stat}_\ell(\vec y) \tilde Z^{\rm stat}_\ell(\vec y)
\ee 
where $Z^{\rm stat}_\ell(\vec y)$ is given by formula \eqref{eq:defZstat} and $\tilde Z^{\rm stat}_\ell(\vec y)$ is given by the same
formula with the set of $B_j(x)$ are replaced by an independent set of Brownian motions $\tilde B_j(x)$. A graphical 
description is given for $\ell=3$ in Fig. \ref{fig:midpoint2} (left).

\begin{figure}
	\centering
\begin{center}	
	\begin{tikzpicture}[scale=0.9]
	\begin{scope}
\draw[gray] (0,0) -- (7,0) node[anchor=west] {$B_3$};
\draw[gray] (0,1) -- (7,1) node[anchor=west] {$B_2$};
\draw[gray] (0,2) -- (7,2) node[anchor=west] {$B_1$};
\draw[gray] (0,3) -- (7,3) node[anchor=west] {$\tilde B_1$};
\draw[gray] (0,4) -- (7,4) node[anchor=west] {$\tilde B_2$};
\draw[gray] (0,5) -- (7,5) node[anchor=west] {$\tilde B_3$};
\draw (4,2) node[anchor=north] {$y_2$};
\draw (6,2) node[anchor=north] {$y_3$};
\draw (1,2) node[anchor=north] {$y_1$};
\draw[->] (3,0) -- (3,0.5);
\draw (3,0) -- (3,1);
\draw[->] (2,1) -- (2,1.5);
\draw (2,1) -- (2,2);
\draw[->] (5,1) -- (5,1.5);
\draw (5,1) -- (5,2);
\draw[->] (4,5) -- (4,4.5);
\draw (4,5) -- (4,4);
\draw[->] (3,4) -- (3,3.5);
\draw (3,4) -- (3,3);
\draw[->] (5,4) -- (5,3.5);
\draw (5,4) -- (5,3);
\draw (1,2) -- (1,3);
\draw (4,2) -- (4,3);
\draw (6,2) -- (6,3);
\draw[ultra thick] (0,0) -- (3,0);
\draw[ultra thick] (0,1) -- (2,1);
\draw[ultra thick] (3,1) -- (5,1);
\draw[ultra thick] (0,2) -- (1,2);
\draw[ultra thick] (2,2) -- (4,2);
\draw[ultra thick] (5,2) -- (6,2);
\draw[ultra thick] (0,3) -- (1,3);
\draw[ultra thick] (3,3) -- (4,3);
\draw[ultra thick] (5,3) -- (6,3);
\draw[ultra thick] (0,4) -- (3,4);
\draw[ultra thick] (4,4) -- (5,4);
\draw[ultra thick] (0,5) -- (4,5);
\draw (0,-0.1) node[anchor=north] {$0$} -- (0,0.1);
\draw (0,0.9) node[anchor=north] {$0$} -- (0,1.1);
\draw (0,1.9) node[anchor=north] {$0$} -- (0,2.1);
\draw (0,2.9) node[anchor=north] {$0$} -- (0,3.1);
\draw (0,3.9) node[anchor=north] {$0$} -- (0,4.1);
\draw (0,4.9) node[anchor=north] {$0$} -- (0,5.1);
	\end{scope}
	\begin{scope}[xshift=8.5cm]
\draw[gray] (0,0) -- (7,0) node[anchor=west] {$B_3$};
\draw[gray] (0,1) -- (7,1) node[anchor=west] {$B_2$};
\draw[gray] (0,2) -- (7,2) node[anchor=west] {$B_1$};
\draw[gray] (0,3) -- (7,3) node[anchor=west] {$\tilde B_1$};
\draw[gray] (0,4) -- (7,4) node[anchor=west] {$\tilde B_2$};
\draw[gray] (0,5) -- (7,5) node[anchor=west] {$\tilde B_3$};
\draw (4,2) node[anchor=north] {$y_2$};
\draw (6,2) node[anchor=north] {$y_3$};
\draw (1,2) node[anchor=north] {$y_1$};
\draw[->] (3,0) -- (3,0.5);
\draw (3,0) -- (3,1);
\draw[->] (2,1) -- (2,1.5);
\draw (2,1) -- (2,2);
\draw[->] (5,1) -- (5,1.5);
\draw (5,1) -- (5,2);
\draw (1,2) -- (1,3);
\draw (4,2) -- (4,3);
\draw (6,2) -- (6,3);
\draw[->] (3,3) -- (3,3.5);
\draw (3,3) -- (3,4);
\draw[->] (5,3) -- (5,3.5);
\draw (5,3) -- (5,4);
\draw[->] (4,4) -- (4,4.5);
\draw (4,4) -- (4,5);
\draw[ultra thick] (0,0) -- (3,0);
\draw[ultra thick] (0,1) -- (2,1);
\draw[ultra thick] (3,1) -- (5,1);
\draw[ultra thick] (0,2) -- (1,2);
\draw[ultra thick] (2,2) -- (4,2);
\draw[ultra thick] (5,2) -- (6,2);
\draw[ultra thick] (1,3) -- (3,3);
\draw[ultra thick] (4,3) -- (5,3);
\draw[ultra thick] (6,3) -- (7,3);
\draw[ultra thick] (3,4) -- (4,4);
\draw[ultra thick] (5,4) -- (7,4);
\draw[ultra thick] (4,5) -- (7,5);

\draw (0,-0.1) node[anchor=north] {$0$} -- (0,0.1);
\draw (0,0.9) node[anchor=north] {$0$} -- (0,1.1);
\draw (0,1.9) node[anchor=north] {$0$} -- (0,2.1);
\draw (7,2.9)  -- (7,3.1) node[anchor=south] {$x$};
\draw (7,3.9)  -- (7,4.1) node[anchor=south] {$x$};
\draw (7,4.9)  -- (7,5.1) node[anchor=south] {$x$};
	\end{scope}
	\end{tikzpicture}
	\end{center}
	\caption{Left: Graphical interpretation of the partition function  $Z^{\rm midpoint}$ defined in \eqref{eq:ZstatMidpoint}, which is proportional to the probability density of the midpoints of $\ell$ non-intersecting polymers.
	Right: Graphical interpretation of the partition function $\tilde Z^{\rm midpoint}$ defined in \eqref{eq:ZstatMidpoint2}, which is proportional to the probability density of the midpoints of $\ell$ non-intersecting polymers.}
	\label{fig:midpoint2}
\end{figure}
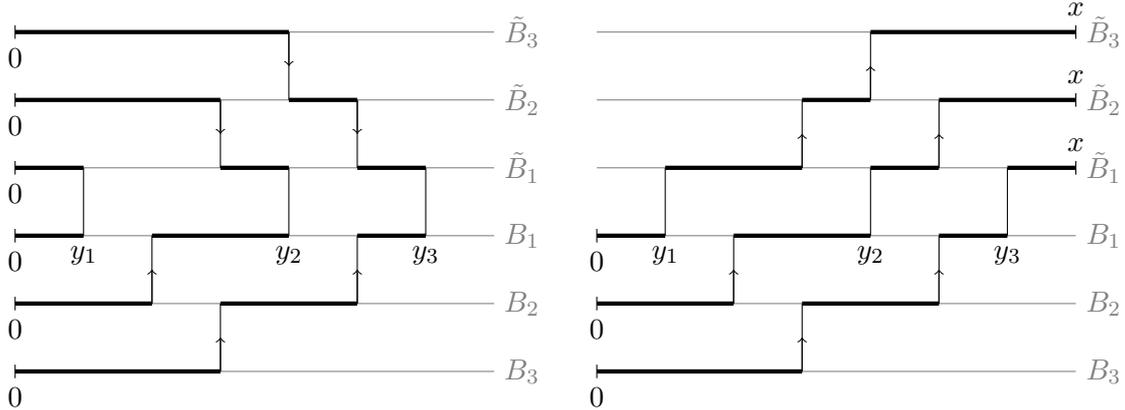

More precisely
one has
\begin{equation} 
\lim_{t \to +\infty} 
 \frac{ Z_{\ell}(\vec x;-t\vert \vec y ;0) Z_{\ell}(\vec y;0\vert \vec x' ;t) }{Z_{\ell}(\vec x;-t\vert \vec z ;0) Z_{\ell}(\vec z;0\vert \vec x' ;t) } \overset{(d)}{=}
\frac{Z^{\rm stat}_\ell(\vec y)}{Z^{\rm stat}_\ell(\vec z)}  \frac{\tilde Z^{\rm stat}_\ell(\vec y)}{\tilde Z^{\rm stat}_\ell(\vec z)}, 
\label{eq:midpoint}
\end{equation}
Note that the ratio $\frac{\tilde Z^{\rm stat}_\ell(\vec y)}{\tilde Z^{\rm stat}_\ell(\vec z)}$ has the same law as $\frac{\tilde Z^{\rm stat}_\ell(x-y_\ell, \dots, x-y_1)}{\tilde Z^{\rm stat}_\ell(x-z_\ell, \dots, x-z_1)}$, where $x>y_\ell$. Thus, the limit \eqref{eq:midpoint} can also be described by the ratios of the partition function 
\begin{equation} \label{eq:ZstatMidpoint2}
\tilde Z_\ell^{\rm midpoint}(\vec y) := Z^{\rm stat}_\ell(y_1, \dots, y_\ell) \tilde Z^{\rm stat}_\ell(x-y_\ell, \dots, x-y_1).
\end{equation}
This partition function has another graphical interpretation, see Fig. \ref{fig:midpoint2} (right), where $\vec y$ represents the positions of the midpoints of $\ell$ non intersecting semi-discrete polymers. We can view $\tilde Z_\ell^{\rm midpoint}(\vec y) $ as proportional to the probability density of the midpoints  for $\ell$ very long non-intersecting polymers. The normalization of this probability measure (restricting $\vec y\in [0,x]^\ell$) is given by 
\begin{equation}
    \int_{\mathbb W_\ell\cap [0,x]^\ell} Z_\ell^{\rm midpoint 2}(\vec y)d\vec y = \det\left(Z^{\rm OY}[(0,i)\vert (x,j-\ell)]\right)_{i,j=1}^\ell,
\end{equation}
where we recall that $\mathbb W_\ell$ was defined in \eqref{eq:defWeyl} and  $Z^{\rm OY}$ is the partition function of a single semi-discrete polymer defined in \eqref{eq:defZOY} (the lines of the O'Connell-Yor polymer are here indexed by integers $\lbrace 1-\ell, 2-\ell, \dots, \ell-1,\ell \rbrace$).

\subsection{Applications to normalized endpoint distributions}
\label{sec:applications}
As is the case for $\ell=1$, the partition sum $Z_\ell^{\rm stat}(\vec y)$ grows unboundedly at large $\vec y$ and 
cannot be normalized on the whole line $\mathbb{R}^\ell$. To define normalized stationary endpoint probabilities, i.e.
for very long non-crossing polymers one could consider the system in a restricted geometry (such as a circle, or in presence of
walls) but then the stationary measure is different and dependent on the boundaries. 
There are still however some interesting observables related to the endpoint probabilities of $\ell$ non crossing
polymers which one can define from our present result, leaving their detailed study for the future.

\subsection*{Conditional probability in a fixed interval} One can study the endpoint probability conditioned to the event that all $\ell$ polymers belong to some fixed interval $[0,L]$. It reads 
\begin{equation}
\mathbb{P} \left( \vec y\in E \vert \vec y\in [0,L] \right) = \frac{\int_{\mathbb W_\ell\cap E^\ell} \mathrm d\vec y Z_{\ell}^{\rm stat}(\vec y)}{\int_{\mathbb W_\ell \cap [0,L]^\ell} \mathrm d\vec y Z_{\ell}^{\rm stat}(\vec y)}.
\end{equation}

\subsection*{Polymers with endpoints submitted to a potential} It is possible to consider a model where the polymer endpoints are submitted to an additional potential,
that we denote $V(\vec y)$. It can result from interactions between the endpoints, or from an external potential acting on
each endpoint $V(\vec y)= \sum_i v(y_i)$. An experimental example of the latter is a tunnelling microscope tip acting on the endpoint of vortex lines
\cite{ge2016nanoscale}.
If $V(\vec y)$ is globally confining, with associated equilibrium Gibbs measure $e^{- V(y)}$, 
it leads to a normalizable endpoint probability for $\ell$ non-crossing very long polymers
\begin{equation}
\mathbb{P} \left( \vec y \right) = \frac{Z^{\rm stat}_{\ell}(\vec y) e^{- V(\vec y)}}{\int_{\mathbb W_\ell} d\vec z Z_{\ell}^{\rm stat}(\vec z) e^{- V(\vec z)} }
\label{eq:endpointpotential}
\end{equation}
We assume here that $V(\vec y)$ grows faster than $\sqrt{|y|}$. 

One realization which does not involve explicitly an additional potential, is to consider the midpoint of $\ell$ non-crossing 
directed polymers in the case where there is no disorder on the top part (assumed here to be of length $T$), 
and all points equal (within $\epsilon$) at the final time (see Figure \ref{fig:midpoint}). 
In that case 
\be 
e^{- V(\vec y)} \to \Delta^\beta(\vec y) e^{- \vec y^2/(4 T)}
\ee 
 where $\beta=1,0$ depending whether one enforces or not the non-crossing constraint in the top part. The 
Vandermonde determinant $\Delta(\vec y)$ results from the non-crossing constraint. Note that now the normalization integral is convergent and finite.

\subsection*{Two non-crossing polymers in an atypical direction}

Consider again the partition sum of two non-crossing polymers $Z^{\rm stat}_\ell(y_1,y_2; a,a)$ with drifts $a_i=- b$ with $b>0$.
We are interested in the PDF, $P(y)$, of the distance $y=y_2-y_1$ between the two endpoints, in the stationary state. 
It can be written as 
\be 
P(y) = \frac{Z(y)}{\int_0^{+\infty}  dy Z(y)}  \quad , \quad Z(y):= Z^{\rm stat}_\ell(0,y; -b,-b) = Z^{\rm OY}((0,2) \vert (y,1))
\ee 
Note that the dependence in $y_1$ alone cancels (i.e. $y_1$ can be set to zero). In the second formula we have used
\eqref{2pol}, where now in the OY partition sum the two Brownians have drifts $-b$. As in \eqref{eq:mainresultwithdrifts}
$P(y)$ is also the PDF of the distance at time zero of the endpoints of two very long polymers starting both at $- b t$ at time $-t$.
It is equivalent to conditioning the two polymers such that the first one (hence both) ends in an atypical direction
with space time slope $b=x/t$. The nice observation is that upon this conditioning the partition sum $Z(y)$ can be normalized,
i.e. the second polymer is then bound to the first one. 

We can compute the thermal cumulants of $y$ as follows. Let us define
\be 
{\sf Z}(\alpha) =  \int_0^{+\infty} dy Z(y) e^{- \alpha y} 
\ee 
Then the thermal cumulants are given as $\langle y^p \rangle^c= (-1)^p \partial_\alpha^p \log {\sf Z}(\alpha)|_{\alpha=0}$. It turns out 
that the full PDF of ${\sf Z}(\alpha)$ can be obtained from known formula for the OY point to line problem. Using
Theorem 3 in \cite{fitzgerald2020point} (setting $n=2$ there) one finds that
\be 
{\sf Z}(\alpha) =  2 W_{11} (W_{12}+W_{21})
\ee 
where the $W_{11},W_{12},W_{21}$ are three i.i.d inverse gamma random variables $\mathrm{Gamma}^{-1}(2 \alpha + 2 b)$ with rate unity. Using that $\mathrm{Gamma}(c_1,1)+\mathrm{Gamma}(c_2,1)\overset{(d)}{=}\mathrm{Gamma}(c_1+c_2, 1)$, and \break $\mathbb E[\log \mathrm{Gamma}(c,1)]=\Psi(c)$  (where $\Psi(z)=\partial_z \log \Gamma(z)$ is the digamma function) one finds
\be 
\mathbb E[\log {\sf Z}(\alpha) ]= \log 2 + \Psi(4 \alpha + 4 b) - 3 \Psi(2 \alpha + 2 b) 
\ee 
Hence one obtains the disorder averaged cumulants
\be 
\mathbb E\left[{\langle y^p \rangle^c}\right] = (-2)^p (2^p \Psi_p(4 b)- 3 \Psi_p(2 b)),
\ee 
where $\Psi_p(z)=\partial_z^p \Psi(z)$ is the $p$-th derivative of the digamma function. 
In the limit of small $b \ll 1$ one finds
\be 
\mathbb E\left[\langle y \rangle\right] \simeq \frac{5}{4 b^2} \quad , \quad \mathbb E\left[\langle y^2 \rangle- \langle y \rangle^2\right] \simeq \frac{5}{2 b^3} \label{thermal} 
\ee 
and we see that the effect of thermal fluctuations, measured for instance by the ratio \break 
$\mathbb E\left[{\langle y^2 \rangle- \langle y \rangle^2}\right]/\left(\mathbb E\left[{\langle y \rangle}\right]\right)^2$,
become subdominant as $b \to 0$. This is expected since it corresponds to the large scale limit. 
Note that these calculations are very similar to the ones performed in \cite{barraquand2021kardar}
for the problem of a single polymer bound to a wall with wall parameter $u=A+\frac{1}{2}=-b<0$. Once again we find that
the two problems are close cousins.

Note that the same problem where the two polymers come from different directions, i.e. have different drifts $(a_1,a_2)=(-b_1,-b_2)$ with $b_1>b_2$ can be treated similarly, with $W_{11}=\Gamma^{-1}(2 \alpha + b_1+b_2)$, $W_{21}=\Gamma^{-1}(2 \alpha + b_1)$, $W_{12}=\Gamma^{-1}(2 \alpha + b_2)$.

\subsubsection*{Limit $b \to 0$ and Dyson's Brownian motion (DBM)} In the limit $b \to 0$ one can rescale the distances as $y=\tilde y/b^2$. Then the variable $\tilde y$
is determined by 
\be 
\tilde y = {\rm argmax}_{z \in \mathbb{R}_+} ( \Lambda_1(z) - z) 
\ee 
where $\Lambda_1(z)$ is the largest eigenvalue in a $\mathrm{GUE}(2)$ DBM. The explicit formula for the 
PDF $p(t)$ of $t=\tilde y$ is obtained in the Appendix, see Eq. \eqref{pt}. From this formula one
obtains for instance the behavior of the lowest moments in the limit $b \ll 1$ as
\be 
\mathbb E\left[\langle y \rangle\right] \simeq \frac{5}{4 b^2} \quad , \quad \mathbb E\left[\langle y^2 \rangle \right] 
\simeq \frac{29}{8 b^4}
\ee 
where the result for the first moment coincides with the result in \eqref{thermal} although it was
obtained by a completely different method. One sees that the second moment diverges much faster
than the thermal one at small
$b$ ($O(1/b^4)$ instead as $O(1/b^3)$) consistent with the above discussion. 
\\

\subsection{Method} 
\label{sec:method}
The partition function $Z_{\ell}(\vec x,0\vert \vec y,t)$ defined in \eqref{eq:defZl} vanishes as soon as some of the $x_i$ are equal or some of the $y_i$ are equal. Let us define 
\begin{equation}
M_\ell(t, \vec x, \vec y)  = \frac{\det\left(Z_1(x_i,0\vert y_j,t)\right)_{i,j=1}^\ell}{\Delta(\vec x)\Delta(\vec y)},
\end{equation}
extended by continuity when some $x_i$ or $y_i$ are equal (see \cite{o2016multi} about mathematical subtleties related to this continuous extension).   When all coordinates are equal, it can be proven \cite{o2016multi},  at least for a regularized noise, that 
\begin{equation}
M_\ell(t, x\vec 1, y\vec 1) = c_n^2 \det\left( \partial_x^{i-1}\partial_y^{j-1} Z(x,0\vert y,t) \right)_{i,j=1}^{\ell} 
\label{eq:relationMndeterminant}
\end{equation}
where $c_n=1/\prod_{j=1}^{\ell-1}j!$, and in that case, $M_\ell(t, x\vec 1, y\vec 1) $ can be identified with the partition function for $\ell$ non-intersecting continuous directed polymer paths all starting from $x$ and all ending at  $y$. The collection of random processes $(M_1(t, x\vec 1, y\vec 1), \dots,  M_\ell(t, x\vec 1, y\vec 1))$ is called the O'Connell-Warren multilayer stochastic heat equation in the mathematical literature. It was rigorously constructed in \cite{o2016multi} through a chaos series expansion.
Let us further define the ratios 
\begin{equation}
    R_{\ell} (t,x,y) = \frac{M_{\ell}(t, x\vec 1, y\vec 1)}{M_{\ell-1}(t, x\vec 1, y\vec 1)}
    \label{eq:defratiosRell}
\end{equation}
with the convention that $M_0=1$.

While it is obvious that for a fixed $\vec x$, the set of all non-intersecting partition functions $\lbrace M_\ell(t, \vec x, \vec y)\rbrace_{\vec y\in \mathbb W_{\ell}}$ evolves in a Markovian way as $t$ increases,  \cite{o2016multi} discovered that for any fixed $\ell$, the time evolution of the much smaller set of partition functions  
\begin{equation}
     \left\lbrace R_1(t, x ,y), R_2(t, x, y), \dots, R_{\ell}(t, x ,y) \right\rbrace_{y\in \mathbb R},
     \label{eq:Markovianratios}
\end{equation}
is also Markovian. In other terms, the partition functions for non-intersecting polymers ending at the same point behave in a Markovian way, which is very nontrivial. This property comes from an analogous result in the discrete setting, where the dynamics of the discrete analogue of \eqref{eq:Markovianratios} are known as the geometric RSK algorithm, and related to the log-gamma discrete directed polymer model \cite{corwin2014tropical}, see Section \ref{sec:geometricRSK} below. These discrete Markovian dynamics are perfectly explicit and a family of invariant measures is described in \cite{corwin2014tropical}. In Section \ref{sec:discretecase}, we interpret these results in terms of partition functions for non-intersecting log-gamma directed polymers. In particular, we define a variant of the log-gamma directed polymer model such that ratios of partition functions for non-intersecting polymers are stationary, i.e. their distribution is invariant with respect to shifting the terminal points. Going to the continuous limit, as we explain in Section \ref{sec:KPZlimit}, we deduce that the process \eqref{eq:Markovianratios} admits a family of invariant measures given by the exponentials of $\ell$ independent standard Brownian  motions with equal drift (see more details in Section \ref{sec:invariantOConnellWarren}). Then, we deduce our main result \eqref{eq:multipolymers} from two different methods:
\begin{enumerate}
    \item Using an explicit relation obtained in \cite{o2016multi} expressing partition functions for non-intersecting continuous polymers with distinct endpoints $M_{\ell}(t, x\vec 1, \vec y)$ in terms of partition functions for non-intersecting polymers with coinciding endpoints, that is in terms of the $R_{j}(x,0\vert y,t)$ for $y\in \mathbb R$ and $j\leq \ell$. 
    \item Alternatively, we show that we may also obtain a discrete analogue of \eqref{eq:multipolymers}  (see Proposition \ref{prop:log-gammaarbitrarymultipoint}). Going to the continuous limit (see Section \ref{sec:alternativemethod}), this allows to determine the invariant measures of the Markov process $t\mapsto \lbrace M_\ell(t, \vec x, \vec y)\rbrace_{\vec y\in \mathbb W_{\ell}}$ from which we deduce  \eqref{eq:multipolymers}. 
\end{enumerate}

\begin{remark}
For fixed $t$, the process \eqref{eq:Markovianratios} has the same distribution as the so-called  KPZ line ensemble introduced in \cite{corwin2016kpz}. Indeed, \cite{nica2021intermediate} proved the convergence of partition functions of non-intersecting semi-discrete  O'Connell-Yor polymers to the multilayer O'Connell-Warren process, which allowed to prove that the O'Connell-Warren multilayer SHE at a given time is given by the KPZ line ensemble. The invariant measure of the O'Connell-Warren process that we find in this paper is consistent with the Gibbs property of the KPZ line ensemble (a specific Gibbs resampling property which implies local Brownianity of the stochastic processes \eqref{eq:Markovianratios}). In particular, the prediction from Remark \ref{rem:t2/3scale} could also be deduced from the convergence of the KPZ line ensemble to the Airy line ensemble \cite{corwin2014brownian} at large scale and the locally Brownian nature of the Airy line ensemble (which is a consequence of the Brownian Gibbs property satisfied by the Airy line ensemble).
\end{remark}

\subsection*{Acknowledgments} G.B. was partially supported by ANR grant ANR-21-CE40-0019.  P.L.D. acknowledges support from the ANR grant ANR-17-CE30-0027-01 RaMaTraF. This article is based upon work supported by the National Science Foundation under Grant No. DMS-1928930 while the authors participated in a program hosted by the Mathematical Sciences Research Institute in Berkeley, California, during the Fall 2021 semester. We also acknowledge hospitality and support from  Galileo Galilei Institute,
during the participation of both authors to the  scientific program on    ``Randomness, Integrability, and 
Universality'' in the Spring 2022.

\section{A stationary model for non-intersecting log-gamma polymers}
\label{sec:discretecase}
The aim of this Section is to obtain discrete analogues of \eqref{eq:multipolymers} and \eqref{eq:defZstat}. 
\subsection{Preliminary notations}
\label{sec:notations}
For a discrete polymer model on $\Z^2$ with Boltzmann weights $w_{i,j}$ we define the partition function as usual by 
\begin{equation}
    \ZZ_1[\mathbf x\vert \mathbf y] = \sum_{\pi: \mathbf x \to\mathbf y} \prod_{(i,j)\in \pi} w_{i,j}
\end{equation}
where the sum runs over upright paths $\pi$ in $\mathbb Z^2$ from $\mathbf x$ to $\mathbf y$.
We also define the partition function
\begin{equation}
    \ZZ_{\ell}[\xx_1, \dots,\xx_{\ell}\vert \yy_1, \dots, \yy_{\ell}]= \sum_{\pi_1, \dots, \pi_{\ell}} \prod_{k=1}^{\ell} \prod_{(i,j)\in \pi_k} w_{i,j}, 
    \label{eq:defpartitionfunctiongeneral}
\end{equation} 
where the sum runs over  $\ell$-tuples of non-intersecting polymers paths $\pi_i$ ($1\le i\le \ell$)  in $\Z^2$  joining the points $\xx_i$ to $\yy_i$. 
By the Lindstr\"om-Gessel-Viennot/Karlin-McGregor lemma,  $\ZZ_\ell$ can be computed as a determinant of single polymer partition functions: 
\begin{equation}
\ZZ_{\ell}[\xx_1, \dots,\xx_{\ell}\vert \yy_1, \dots, \yy_{\ell}] = \det\left(\ZZ_1[\xx_i\vert\yy_j]\right)_{i,j=1}^{\ell}.
\label{eq:LGVdiscrete}
\end{equation}

\subsection{The geometric RSK correspondence and  polymer partition functions}
\label{sec:geometricRSK}
In this Section we will discuss the evolution of  the partition function for $\ell$ non-intersecting polymers, starting close to $(1,1)$ and arriving close to $(n,m)$,  as the point $(n,m)$ varies. It will be convenient to use the shorthand notation
\begin{equation}
\Zcal_{\ell}(n,m) = \sum_{\pi} \prod_{r=1}^{\ell}\prod_{(i,j)\in \pi_r} w_{i,j}. 
\label{eq:partitionfunctionell}
\end{equation}
to be the partition function where the sum runs over $\ell$-tuples $\pi=(\pi_1, \dots, \pi_{\ell})$ of non-intersecting upright paths so that $\pi_r$ joins $(1,r)$ to $(n,m+r-\ell)$. In particular, 
\begin{equation}
\Zcal_{\ell}(n,m) = \det\left(\ZZ_1[(1,i)\vert(n,m+j-\ell)]\right)_{i,j=1}^{\ell}.
\end{equation} 
Following \cite{corwin2014tropical}, we also define variables $z_{m,\ell}(n)$ such that 
\begin{equation}
    \Zcal_{\ell}(n,m) =  z_{m,1}(n) \dots z_{m,\ell}(n).
    \label{eq:defsmallZ}
\end{equation}  
In other terms, $z_{m,1}(n) = \Zcal_1(n,m)$ and for $\ell\ge 2$, $z_{m,\ell}(n) = \Zcal_{\ell}(n,m)/\Zcal_{\ell-1}(n,m)$. 

\begin{proposition}[{\cite[Proposition 2.5]{corwin2014tropical}}]
Fix some integer $M$ (it will correspond to the maximal vertical coordinate, i.e. $1\le m\le M$). The array of numbers $(z_{m,\ell}(n))_{1\le \ell\le m\le M}$ evolves as follows as $n$ is increased. At each step, the recurrence involves an auxiliary array of numbers $a_{m,\ell}$. Given $(z_{m,\ell}(n-1))_{1\le \ell\le m\le M}$, one constructs $z_{m,\ell}(n)$ for increasing values of $\ell$ using the recurrence
\begin{subequations}
\begin{align}
	a_{m,1}&=w_{n,m}  &\textrm{for }& 1\le m\leq M\\
	a_{m+1,\ell+1}&=a_{m+1,\ell} \frac{z_{m+1,\ell}(n-1) z_{m,\ell}(n)}{z_{m+1,\ell}(n) z_{m,\ell}(n-1)}
	&\textrm{for }& 1\le \ell\leq m<M\\
	z_{m,\ell}(n)&= a_{m,\ell}\cdot \left(z_{m,\ell}(n-1)+z_{m-1,\ell}(n)\right)&\textrm{for }& 1\le \ell<m\leq M \label{eq:recurrencespecial}\\
	z_{m,m}(n)&= a_{m,m}\cdot z_{m,m}(n-1) &\textrm{for }& 1\le m\leq M.
\end{align}
\label{eq:recurrenceRSK}
\end{subequations}
\end{proposition}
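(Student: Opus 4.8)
The plan is to strip the statement down to a single bilinear identity for the partition functions $\Zcal_\ell$ and then prove that identity; everything else in \eqref{eq:recurrenceRSK} is bookkeeping.

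\textbf{Step 1: solve for the auxiliary array and reduce.} Recall from \eqref{eq:defsmallZ} that $z_{p,j}(n)=\Zcal_j(n,p)/\Zcal_{j-1}(n,p)$ with the convention $\Zcal_0\equiv 1$. Starting from $a_{m,1}=w_{n,m}$ and iterating the second relation of \eqref{eq:recurrenceRSK}, the ratios of $z$'s telescope and one obtains the closed form
\begin{equation*}
a_{m,\ell}=w_{n,m}\,\frac{\Zcal_{\ell-1}(n-1,m)\,\Zcal_{\ell-1}(n,m-1)}{\Zcal_{\ell-1}(n,m)\,\Zcal_{\ell-1}(n-1,m-1)}.
\end{equation*}
Plugging this into the recurrence \eqref{eq:recurrencespecial} and clearing the denominators $\Zcal_{\ell-1}$ turns it into
\begin{equation}
\Zcal_\ell(n,m)\,\Zcal_{\ell-1}(n-1,m-1)=w_{n,m}\Big(\Zcal_{\ell-1}(n,m-1)\,\Zcal_\ell(n-1,m)+\Zcal_{\ell-1}(n-1,m)\,\Zcal_\ell(n,m-1)\Big).
\label{eq:bilinearkey}
\end{equation}
The last relation of \eqref{eq:recurrenceRSK}, $z_{m,m}(n)=a_{m,m}z_{m,m}(n-1)$, is then the case $\ell=m$ of \eqref{eq:bilinearkey}: a family of $m$ non-intersecting up-right paths cannot start at $(1,1),\dots,(1,m)$ and terminate on rows $0,1,\dots,m-1$, so $\Zcal_m(n,m-1)=0$ and the last term disappears. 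Hence the whole Proposition amounts to \eqref{eq:bilinearkey} holding for all $1\le \ell\le m\le M$ (together with the trivial identity $a_{m,1}=w_{n,m}$ and the definition of the $a_{m,\ell}$).

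\textbf{Step 2: prove the bilinear identity \eqref{eq:bilinearkey}.} For $\ell=1$ it reads $\Zcal_1(n,m)=w_{n,m}\big(\Zcal_1(n-1,m)+\Zcal_1(n,m-1)\big)$, which is immediate by decomposing an up-right path according to whether it enters $(n,m)$ from the left or from below. For $\ell\ge 2$ I would use the Lindström--Gessel--Viennot formula \eqref{eq:LGVdiscrete} to write $\Zcal_\ell(a,b)=\det\big(f_i(a,b-\ell+j)\big)_{i,j=1}^{\ell}$ with $f_i(a,b):=\ZZ_1[(1,i)\vert(a,b)]$, each $f_i$ still satisfying the one-step recurrence. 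Using that recurrence column by column, one peels the weights $w_{n,m},w_{n,m-1},\dots$ off the columns of the determinant defining $\Zcal_\ell(n,m)$ and writes all the partition functions occurring in \eqref{eq:bilinearkey} as minors of one fixed rectangular array built from the columns $\{f_i(n,\cdot)\}\cup\{f_i(n-1,\cdot)\}$; \eqref{eq:bilinearkey} then follows from a Desnanot--Jacobi / Dodgson-condensation (Plücker) relation among those minors, the two terms on the right-hand side being the two products of complementary minors. Alternatively --- and this is the route taken in \cite{corwin2014tropical} --- one invokes the geometric RSK correspondence: the array $\big(\Zcal_\ell(n,m)\big)$ is exactly the image of the weight matrix $(w_{i,j})_{i\le n,\,j\le m}$ under $\mathrm{gRSK}$, appending the $n$-th column of weights is geometric column insertion, and \eqref{eq:recurrenceRSK} is precisely the birational (Noumi--Yamada) description of the local moves performed by that insertion.

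\textbf{The main obstacle.} Step 1 and the case $\ell=1$ are routine. The real work is \eqref{eq:bilinearkey} for $\ell\ge 2$: selecting the right minors and keeping track of the interleaving of the horizontal coordinates $n$ and $n-1$ with the heights $m-\ell+1,\dots,m$, so that the Plücker relation yields exactly those two terms with the single prefactor $w_{n,m}$. One must also handle the degenerate cases at the edges of the triangular array --- partition functions that vanish because the prescribed path families cannot exist --- in order to match the index ranges $1\le\ell<m\le M$ in \eqref{eq:recurrencespecial} and $1\le m\le M$ in the boundary relation.
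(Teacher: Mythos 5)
Your proposal is correct, but it takes a genuinely different route from the paper: the paper gives no proof at all, quoting the statement directly from \cite{corwin2014tropical}, where it is obtained from the geometric RSK machinery (the Noumi--Yamada local-move description of gRSK together with the identification of the gRSK image with ratios of non-intersecting partition functions). Your Step 1 is a correct and complete reduction: telescoping the second line of \eqref{eq:recurrenceRSK} does give $a_{m,\ell}=w_{n,m}\,\Zcal_{\ell-1}(n-1,m)\Zcal_{\ell-1}(n,m-1)/\bigl(\Zcal_{\ell-1}(n,m)\Zcal_{\ell-1}(n-1,m-1)\bigr)$, and since all the $\Zcal$'s are strictly positive (where the path families exist), lines \eqref{eq:recurrencespecial} and the diagonal relation are jointly equivalent to the single bilinear identity
\begin{equation*}
\Zcal_\ell(n,m)\,\Zcal_{\ell-1}(n-1,m-1)=w_{n,m}\bigl(\Zcal_{\ell-1}(n,m-1)\,\Zcal_\ell(n-1,m)+\Zcal_{\ell-1}(n-1,m)\,\Zcal_\ell(n,m-1)\bigr),
\end{equation*}
with the convention $\Zcal_m(n,m-1)=0$ handling $\ell=m$. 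Your Step 2 sketch is also implementable as stated: writing $\Zcal_\ell(a,b)$ via Lindstr\"om--Gessel--Viennot with columns $f_i(a,\cdot)$ and using the one-step recurrence $f_i(n,j)=w_{n,j}(f_i(n-1,j)+f_i(n,j-1))$ as column operations, one peels off $w_{n,m},\dots,w_{n,m-\ell+2}$ and reduces every determinant in the identity to a maximal or next-to-maximal minor of the single $\ell\times(\ell+1)$ array with columns $f_\cdot(n,m-\ell+1),f_\cdot(n-1,m-\ell+1),\dots,f_\cdot(n-1,m)$ (one extra application of the recurrence at height $m-\ell+1$ is needed to absorb the stray column $f_\cdot(n,m-\ell)$ and the leftover weight $w_{n,m-\ell+1}$); the three-term Grassmann--Pl\"ucker relation for that array then yields exactly the two products on the right-hand side with the single prefactor $w_{n,m}$, and I checked that the signs and the $\ell=2$ case work out. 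So your argument buys an elementary, self-contained verification (LGV plus a Pl\"ucker/Desnanot--Jacobi relation) of a fact that the paper imports wholesale, at the cost of some index bookkeeping and of handling degenerate configurations (small $n$, where some $\Zcal_\ell$ vanish) that the gRSK formulation of \cite{corwin2014tropical} packages automatically; the citation route, in turn, keeps the statement tied to the gRSK framework that the paper uses again later (e.g.\ for the invariant measure and the invariance property of Section \ref{sec:identity}).
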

The recursion \eqref{eq:recurrenceRSK} is one of the definitions of the so-called geometric Robinson-Schensted-Knuth (RSK) correspondence. The usual RSK correspondence is a bijection between matrices of non-negative integers and couples of semi-standard Young tableaux of the same shape, related to the representation theory of the symmetric group, whose definition is not important for our purposes, although we point out that it is known to be related to zero temperature models of directed polymers \cite{johansson2000shape}. A variant of the RSK correspondence, which can be thought of as a positive temperature analogue, was introduced by Kirillov \cite{Kri01} and is now referred to as the geometric RSK correspondence (see also \cite{NY04, corwin2014tropical}). This is a bijection between matrices of positive reals $(w_{i,j})_{1\leq i\leq n,1\leq j\leq M}$ and couples of arrays of positive real numbers $z=(z_{m,\ell}(n))_{1\le \ell\le m\le M}$  and $\tilde z=(\tilde z_{m,\ell}(M))_{1\le \ell\le m\le n}$. Under this bijection, the array $z$ is precisely defined by the recurrence relation \eqref{eq:recurrenceRSK} while the array $\tilde z$ can be defined by similar recurrence relations, essentially exchanging the roles of rows and columns, and we will not need it. 

The crucial fact that we will use below in order to find stationary non-intersecting polymer partition function is that the partition function $z_{m,\ell}(n)$ satisfies a recurrence relation \eqref{eq:recurrencespecial}, which takes a very similar form for any $\ell$ (recall that $\ell$ correspond to the number of non-intersecting paths). For $\ell=1$, this relation is trivial, it is equivalent to 
\begin{equation}
    \Zcal_1(n,m)=w_{n,m}(\Zcal_1(n-1,m)+\Zcal_1(n,m-1)).
    \label{eq:recurrencesinglepolymer}
\end{equation}
However, for $\ell>1$, the relation \eqref{eq:recurrencespecial} is not obvious at all. 

\subsection{Invariant measure for geometric RSK dynamics}

In this section, we will assume that the weights $w_{i,j}$ are distributed as $w_{i,j}\sim\Gammainv(\gamma_{i,j})$ and 
$\gamma_{i,j} = \alpha_i + \beta_j >0.$ For any $\gamma>0$, we denote by $\Gammainv(\gamma)$ the inverse gamma distribution with shape parameter $\gamma$ (and scale parameter $1$). The corresponding directed polymer model was first introduced in \cite{seppalainen2012scaling}, in the case where $\gamma_{i,j}$ are all identical, and in \cite{corwin2014tropical} in the general case. Let us define ratios of partition functions $u$ (horizontal ratios) and $v$ (vertical ratios) by 
\begin{equation} 
v_{m,\ell}(n) =  \frac{z_{m,\ell}(n)}{z_{m-1,\ell}(n)}\text{ and }u_{m,\ell}(n) =  \frac{z_{m,\ell}(n)}{z_{m,\ell}(n-1)}.
\label{eq:uvincrements}
\end{equation}
 For $\ell=1$ these ratios were denoted $U,V$ in \cite{seppalainen2012scaling}. The reason why this model was introduced in \cite{seppalainen2012scaling} is that one can find a stationary version of \eqref{eq:recurrencesinglepolymer} in the sense that if one imposes some special distribution on $\Zcal_1(n,1)$ and $\Zcal_1(1,m)$, i.e. on the boundaries of the quadrant $\mathbb Z_{>0}^2$, then all the $u$ and $v$ increment ratios will have the same distribution for any $n,m$ in the quadrant. This is based on the following property of Gamma random variables. 
 \begin{lemma}[{\cite[Lemma 3.2]{seppalainen2012scaling}}]
Let $U,V,w$ be independent random variables. Let 
\begin{equation}
U'=w\left(1+\frac{U}{V} \right), \;\; V'=w\left(1+\frac{V}{U} \right), \;\; w'=\left( \frac{1}{U} +\frac{1}{V}\right)^{-1}. 
\end{equation}
If for some  $\alpha>0$ and $\theta\in (-\alpha, \alpha)$,  $U\sim \mathrm{Gamma^{-1}}(\alpha+\theta)$, $V\sim \mathrm{Gamma^{-1}}(\alpha-\theta)$, $w\sim \mathrm{Gamma^{-1}}(2\alpha)$, then the triples $(U,V,w)$ and $(U', V', w')$ have the same distribution. 
\label{lem:seppalainenstationary}
\end{lemma}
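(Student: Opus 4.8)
The plan is to pass to reciprocals, where the claim becomes a clean instance of the beta--gamma algebra. Set $X=1/U$, $Y=1/V$, $Z=1/w$, so that $X\sim\Gammavar(\alpha+\theta)$, $Y\sim\Gammavar(\alpha-\theta)$, $Z\sim\Gammavar(2\alpha)$ are independent, and likewise $X'=1/U'$, $Y'=1/V'$, $Z'=1/w'$. A direct rewriting of the defining formulas gives, since $U'=w(U+V)/V$ and $1/U'=\tfrac1w\cdot\tfrac{V}{U+V}=\tfrac1w\cdot\tfrac{1/U}{1/U+1/V}$,
\begin{equation}
X'=Z\,\frac{X}{X+Y},\qquad Y'=Z\,\frac{Y}{X+Y},\qquad Z'=X+Y.
\end{equation}
Hence it suffices to show $(X',Y',Z')\overset{(d)}{=}(X,Y,Z)$.

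Next I would invoke the two halves of the beta--gamma algebra. Forward half: with $S:=X+Y$ and $B:=X/(X+Y)$ one has $S\sim\Gammavar(2\alpha)$, $B\sim\mathrm{Beta}(\alpha+\theta,\alpha-\theta)$, and $S$ is independent of $B$; consequently $(S,B,Z)$ are mutually independent. Then $Z'=S\sim\Gammavar(2\alpha)$, while $(X',Y')=(ZB,\,Z(1-B))$ is a function of $(Z,B)$ alone and is therefore independent of $Z'=S$. Reverse half: if $Z\sim\Gammavar(2\alpha)$ and $B\sim\mathrm{Beta}(\alpha+\theta,\alpha-\theta)$ are independent, then $(ZB,\,Z(1-B))\overset{(d)}{=}\big(\Gammavar(\alpha+\theta),\,\Gammavar(\alpha-\theta)\big)$ with the two coordinates independent --- this is precisely the decomposition of the forward half read backwards. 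Combining the two, $X'\sim\Gammavar(\alpha+\theta)$, $Y'\sim\Gammavar(\alpha-\theta)$, $Z'\sim\Gammavar(2\alpha)$, all three independent, which is exactly the joint law of $(X,Y,Z)$. Undoing the reciprocals yields $(U',V',w')\overset{(d)}{=}(U,V,w)$.

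For self-containedness I would include the one-line Jacobian proofs of the two beta--gamma facts, which are classical: the change of variables $(g_a,g_b)\mapsto(s,b)=\big(g_a+g_b,\,g_a/(g_a+g_b)\big)$ on $(0,\infty)^2$ has Jacobian $s$, and substituting $g_a=sb$, $g_b=s(1-b)$ into the product of the $\Gammavar(a)$ and $\Gammavar(b)$ densities factorizes it exactly into a $\Gammavar(a+b)$ density in $s$ times a $\mathrm{Beta}(a,b)$ density in $b$; the reverse statement is the same identity read in the other direction. Alternatively one could skip the structured argument and simply compute the Jacobian of $(U,V,w)\mapsto(U',V',w')$ directly and verify that the product of inverse-gamma densities is preserved, but the beta--gamma route is more transparent and makes clear why the shape parameters must be $\alpha+\theta$, $\alpha-\theta$, $2\alpha$.

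The main obstacle is not conceptual but bookkeeping: one must carefully track which variables remain independent after the passage from $(X,Y)$ to $(S,B)$, and in particular use that $B$ is independent of $S$ (so that $Z'=S$ stays independent of $(X',Y')=(ZB,Z(1-B))$). That independence is exactly the nontrivial content of the beta--gamma algebra, and it is the place where the specific gamma/inverse-gamma choice of distributions is forced. Everything else is routine algebra on the defining formulas.
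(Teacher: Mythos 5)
Your proof is correct: the passage to reciprocals turns the map into $(X,Y,Z)\mapsto\bigl(ZX/(X+Y),\,Z Y/(X+Y),\,X+Y\bigr)$, and the two halves of the beta--gamma algebra (with the independence of $X+Y$ and $X/(X+Y)$ doing the real work) give exactly the claimed preservation of the joint law. The paper itself offers no proof of this lemma --- it simply cites Lemma 3.2 of \cite{seppalainen2012scaling} --- and your argument is essentially the standard one underlying that reference, so there is nothing to flag beyond noting that your write-up is a valid self-contained substitute for the citation.
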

For $\ell=1$, the recurrence relation \eqref{eq:recurrenceRSK} along with the definition of increment ratios \eqref{eq:uvincrements} imply that 
\begin{equation}
u_{m,1}(n) = w_{n,m}\left( 1 +\frac{u_{m-1,1}(n)}{v_{m,1}(n-1)} \right),\;\;\;v_{m,1}(n) =  w_{n,m}\left( 1 +\frac{v_{m,1}(n-1)}{u_{m-1,1}(n)} \right).
\label{eq:recurrenceforincrements}
\end{equation}
 These relation have exactly the same form as in Lemma \ref{lem:seppalainenstationary}. This imply that if for $m=1$ and for $n=1$, the ratios $u,v$ are independent and distributed as $u\sim \mathrm{Gamma^{-1}}(\alpha+\theta)$, $v\sim \mathrm{Gamma^{-1}}(\alpha-\theta)$, then this will be the case for all $n,m$.

 More generally, using \eqref{eq:recurrenceRSK}, for any $\ell$, the ratios $u,v$ satisfy 
 \begin{equation}
u_{m,\ell}(n) = a_{m,\ell}\left( 1 +\frac{u_{m-1,\ell}(n)}{v_{m,\ell}(n-1)} \right),\;\;\;v_{m,\ell}(n) =  a_{m,\ell}\left( 1 +\frac{v_{m,\ell}(n-1)}{u_{m-1,\ell}(n)} \right),
\label{eq:recurrenceforincrementsgeneral}
\end{equation}
where the $a_{m,\ell}$ are defined in \eqref{eq:recurrenceRSK}. If one can impose that these coefficients $a_{m,\ell}$ are independent and inverse gamma distributed, then we will obtain stationary distributions of increment ratios $u_{m,\ell}(n) ,v_{m,\ell}(n)$ for any $\ell$. This is indeed possible, and it was noticed in \cite{corwin2014tropical}. The following proposition summarizes the consequences. 
\begin{proposition}[{\cite[Theorem 3.10]{corwin2014tropical}}]
Assume that the array $\mathbf z(n)$ evolves according to the recurrence \eqref{eq:recurrenceRSK}. Let $k\geq 1$ and assume $\beta_1<\beta_2<\dotsm<\beta_k<\min\{\beta_{k+1}, \beta_{k+2, }\dotsc\}$. 

 Then the process
$\mathbf v(n) =  (v_{m,\ell})_{1\le \ell\le k,\, m>\ell}$ has an invariant distribution where the variables $v_{m,\ell}$ 
are independent with marginal distributions
$v_{m,\ell}\sim\Gammainv(\beta_m-\beta_\ell)$.
Moreover, if the process $\mathbf v(n) =  (v_{m,\ell})_{1\le \ell\le k,\, m>\ell}$ is started  with this distribution when $n=1$, then the variables
$$\{v_{m,\ell}(N) : 1\le\ell\le j,\, \ell<m\le M\}\cup
\{ u_{M,\ell}(n): 1\le n\le N, \, 1\le\ell\le j\}$$ are independent with
marginals $v_{m,\ell}(n)\sim\Gammainv(\beta_m-\beta_\ell)$ and
$ u_{M,\ell}(n)\sim\Gammainv(\alpha_n+\beta_\ell)$.
This also implies that along any down-right path in the $\mathbb Z_{>0}^2$ lattice, the $u,v$ increment ratios  are all independent, for $1\le \ell\le k$ and $(n,m)$ along the downright path, with marginals $v_{m,\ell}(n)\sim\Gammainv(\beta_m-\beta_\ell)$ and
$ u_{m,\ell}(n)\sim\Gammainv(\alpha_n+\beta_\ell)$. 
\label{prop:COSZstationary}
\end{proposition}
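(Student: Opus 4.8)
The plan is to bootstrap from the single-polymer case $\ell=1$ (Seppäläinen's stationarity, Lemma~\ref{lem:seppalainenstationary}), one layer at a time. The structural fact that makes this work is \eqref{eq:recurrenceforincrementsgeneral}: for \emph{every} layer $\ell$ the increment ratios $u_{\cdot,\ell},v_{\cdot,\ell}$ obey exactly the update rule of Lemma~\ref{lem:seppalainenstationary}, with the Boltzmann weight $w_{n,m}$ replaced by the auxiliary weight $a_{m,\ell}$ appearing in \eqref{eq:recurrenceRSK}. Thus, if in the relevant regime the effective weights $\{a_{m,\ell}\}$ form a fresh independent field of inverse gammas with the right shape parameters, the entire single-polymer argument carries over to layer $\ell$ verbatim.

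First I would record the case $k=1$. Here $a_{m,1}=w_{n,m}\sim\Gammainv(\alpha_n+\beta_m)$, so one is exactly in the setting of \cite{seppalainen2012scaling}: with suitable independent inverse-gamma boundary data, Lemma~\ref{lem:seppalainenstationary} acts as a \emph{local corner flip}, and running an induction over down-right lattice paths in $\Z_{>0}^2$ — each step pushing the path past one vertex, using that the vertex update touches only the three increment ratios adjacent to that corner — shows that the increment ratios along \emph{any} down-right path are independent with marginals $v_{m,1}\sim\Gammainv(\beta_m-\beta_1)$ and $u_{m,1}\sim\Gammainv(\alpha_n+\beta_1)$. Specialising the path to a horizontal or a vertical line gives the $k=1$ statement.

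Next I would induct on $k$. Assume layers $1,\dots,k-1$ are jointly in the stationary regime of the proposition. The line of \eqref{eq:recurrenceRSK} defining $a_{m+1,\ell+1}$ expresses the array $a_{\cdot,k}$ at step $n$ as a deterministic function of the data of layers $1,\dots,k-1$ alone (the auxiliary arrays $a_{\cdot,\ell}$ and the ratios $u_{\cdot,\ell},v_{\cdot,\ell}$ for $\ell<k$, including $a_{\cdot,1}=w_{n,\cdot}$), whose joint law is pinned down by the induction hypothesis. The key computation is then to check, via the elementary gamma-calculus identities $\Gammavar(c_1)+\Gammavar(c_2)\overset{(d)}{=}\Gammavar(c_1+c_2)$ together with the beta--gamma independence that also underlies Lemma~\ref{lem:seppalainenstationary}, that the $a_{m,k}$ are independent, distributed as $\Gammainv(\alpha_n+\beta_m)$, and independent of the layer-$k$ increments with which they have not yet interacted. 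Granting this, the corner-flip argument applies to layer $k$ and yields the independence together with the marginals $v_{m,k}(n)\sim\Gammainv(\beta_m-\beta_k)$, $u_{M,k}(n)\sim\Gammainv(\alpha_n+\beta_k)$, and the down-right path statement. The hypothesis $\beta_1<\dots<\beta_k<\min\{\beta_{k+1},\beta_{k+2},\dots\}$ is used precisely here: it is equivalent to $\beta_m-\beta_\ell>0$ for all $\ell\le k$ and $m>\ell$, which is exactly what makes every inverse-gamma law in sight well-defined.

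The main obstacle is the control of the effective weights $a_{\cdot,k}$ across layers: one must identify the right bundle of variables to propagate along the induction (the increment ratios of layers $1,\dots,k$ along a down-right path, together with the auxiliary arrays $a_{\cdot,\ell}$) and verify that the local distributional identities compose so that, at each stage, the effective weights entering a given layer are independent of that layer's ``future''. The $\ell=1$ corner-flip part is routine once set up; making the cross-layer independence precise is the substance of \cite[Theorem~3.10]{corwin2014tropical}.
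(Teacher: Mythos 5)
First, note that the paper itself offers no proof of Proposition \ref{prop:COSZstationary}: it is imported verbatim from \cite[Theorem 3.10]{corwin2014tropical}. So the relevant comparison is with the proof in that reference, and your outline does follow the same architecture as the cited argument: treat each level $\ell$ as a Sepp\"al\"ainen-type corner-flip dynamics driven by the effective weights $a_{m,\ell}$, and induct on levels. Your identification of the target law $a_{m,\ell}\sim\Gammainv(\alpha_n+\beta_m)$ for every $\ell$ is also correct, and your reading of the ordering hypothesis on the $\beta$'s is accurate.

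However, as written the proposal has a genuine gap at exactly the step that carries all the content: you ``grant'' that the $a_{m,k}$ are independent inverse gammas, independent of the layer-$k$ data they have not yet interacted with, and the tools you invoke for this (additivity of gamma laws, beta--gamma algebra) are not what closes it. The mechanism that does close it is the observation that the next level's driving weight is precisely the third output variable of Lemma \ref{lem:seppalainenstationary} at the previous level: combining \eqref{eq:recurrenceRSK} with \eqref{eq:recurrenceforincrementsgeneral} gives
\begin{equation*}
a_{m+1,\ell+1} \;=\; a_{m+1,\ell}\,\frac{u_{m,\ell}(n)}{u_{m+1,\ell}(n)} \;=\; \left(\frac{1}{u_{m,\ell}(n)}+\frac{1}{v_{m+1,\ell}(n-1)}\right)^{-1},
\end{equation*}
i.e.\ $a_{m+1,\ell+1}=w'$ where the level-$\ell$ update at row $m+1$ maps $(U,V,w)=(u_{m,\ell}(n),\,v_{m+1,\ell}(n-1),\,a_{m+1,\ell})$ to $(U',V',w')=(u_{m+1,\ell}(n),\,v_{m+1,\ell}(n),\,a_{m+1,\ell+1})$. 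Lemma \ref{lem:seppalainenstationary} then yields the marginal $\Gammainv\bigl((\alpha_n+\beta_\ell)+(\beta_{m+1}-\beta_\ell)\bigr)=\Gammainv(\alpha_n+\beta_{m+1})$, which is exactly the input law level $\ell+1$ needs since $(\alpha_n+\beta_{\ell+1})+(\beta_{m+1}-\beta_{\ell+1})=\alpha_n+\beta_{m+1}$; but the marginal alone is not enough. One must run a sequential (Burke-type) induction along each row, using the full triple-law equality of the lemma, to show that the collection of produced $w'$-variables is jointly independent and independent of the post-update ratios $(U',V')$ that feed the remaining sites, so that at level $\ell+1$ the $a_{m,\ell+1}$ really form a fresh independent environment. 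This bookkeeping (which also delivers the joint independence of the $v$'s at time $N$ with the $u$'s along the top row, and hence the down-right path statement) is the substance of \cite[Theorem 3.10]{corwin2014tropical}; your proposal names the right destination but leaves this step, and the identity above that enables it, unproved.
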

Since the evolution of the process $\mathbf z(n)$ was introduced in \eqref{eq:recurrenceRSK} to describe the evolution of partition functions of non-intersecting polymers (as the horizontal coordinate $n$ of arrival points increases), it is natural to ask whether the invariant measure in Proposition \ref{prop:COSZstationary} can be seen as ratios of partition functions of some polymer model with a certain distribution of weights. It is not clear to us if this can be done in general, but we will see in the next section that it will be possible in a certain limiting sense.

\subsection{Stationary non-intersecting log-gamma polymers} 
Using Proposition \ref{prop:COSZstationary}, we will define a stationary log-gamma polymer model for each integer $k\ge 1$. When $k=1$, the model will be exactly Sepp\"al\"ainen's stationary model from \cite{seppalainen2012scaling}. For $k>1$, the model will be such that the distribution of $\Zcal_{\ell}$ for $1\le\ell\le k$ is stationary (in the sense that the $u$ and $v$ increment ratios have the same distribution anywhere in the lattice).  
Recall the definition of the partition function  \eqref{eq:defpartitionfunctiongeneral}. 

\begin{definition}
\label{def:stationarymodel}
Fix $k\ge 1$ and $0<\lambda<\theta$. Let us define independent weights $w_{i,j}$ such that 
\begin{equation}
   w_{i,j} \sim \begin{cases}
    1 &\mbox{ if }1\le i,j\le k,\\
    \mathrm{Gamma}^{-1}(\lambda)&\mbox{ if }i> k, j\le k,\\
    \mathrm{Gamma}^{-1}(\theta-\lambda)&\mbox{ if }i\le k, j> k,\\
    \mathrm{Gamma}^{-1}(\theta)&\mbox{ if }i,j> k.
    \end{cases}
    \label{eq:stationaryweights}
\end{equation}
We define partition functions $\Zcal^{\rm stat}_{\ell}(n,m)$ for $1\le \ell\le k$ as follows. 
\begin{equation}  
    \Zcal^{\rm stat}_{\ell}(n,m) =  \ZZ_{\ell}\left[ (1,1), \dots, (1,\ell) \Big\vert (n,m), (n,m-1), \dots, (n,m-\ell+1)  \right],
    \label{eq:defZcalstat}
\end{equation} 
where we impose the extra condition that the $\ell$ non-intersecting polymers go through coordinates $(k-\ell+1,k), (k-\ell+2,k-1), \dots, (k,k-\ell+1)$, and all vertices in the triangle formed by the points $((k-\ell+1,k))$, $(k,k-\ell+1)$ and $(k,k)$ must belong to a path (see Fig. \ref{fig:Zstat}). We will call \textit{bottom-left packed}  such path configurations. 
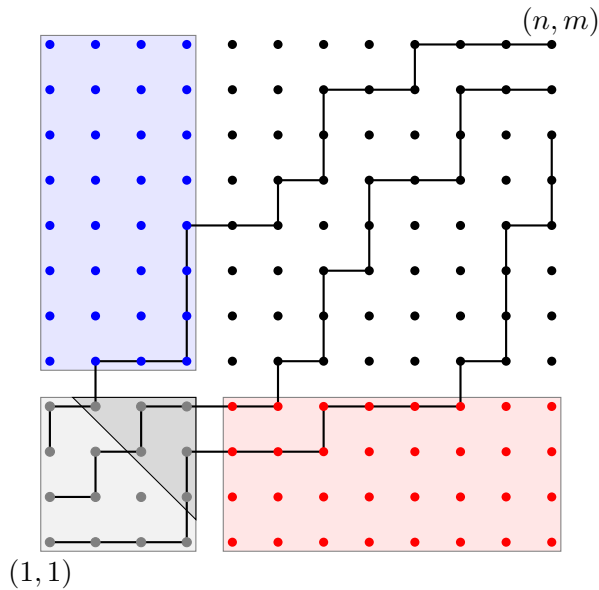
\begin{figure}
    \centering
    \begin{tikzpicture}[scale=0.6]
    \filldraw[fill=black!5!, draw=gray] (.8,.8) rectangle (4.2,4.2);
    \filldraw[fill=black!15!, draw=black] (1.5,4.2) -- (4.2,4.2) -- (4.2,1.5) -- cycle;
    \filldraw[fill=blue!10!, draw=gray] (.8,4.8) rectangle (4.2,12.2);
    \filldraw[fill=red!10!, draw=gray] (4.8,.8) rectangle (12.2,4.2);

    \draw[thick] (1,3) -- (1,4) -- (2,4) -- (2,5) -- (4,5) -- (4,8) -- (6,8) -- (6,9) -- (7,9) -- (7,11) -- (9,11) -- (9,12) -- (12,12);
    \draw[thick] (1,2) -- (2,2) -- (2,3) -- (3,3) -- (3,4) -- (6,4) -- (6,5) -- (7,5) -- (7,7)-- (8,7) -- (8,9) -- (9,9) -- (10,9) -- (10,11) --  (12,11);
    \draw[thick] (1,1)-- (4,1) -- (4,2) -- (4,3) -- (7,3) -- (7,4) -- (10,4) -- (10,5) -- (11,5) -- (11,8) -- (12,8)-- (12,10);
    \foreach \x in {1,...,4}
    {\foreach \y in {1,...,4}
    \filldraw[gray](\x,\y) circle(0.1);
    }
    \foreach \x in {1,...,4}
    {\foreach \y in {5,...,12}
    \fill[blue](\x,\y) circle(0.1);
    }
    \foreach \x in {5,...,12}
    {\foreach \y in {1,...,4}
    \fill[red](\x,\y) circle(0.1);
    }
    \foreach \x in {5,...,12}
    {\foreach \y in {5,...,12}
    \fill(\x,\y) circle(0.1);
    }
    
    \draw (0.8,0.3) node{$(1,1)$};
    \draw (12.2,12.5) node{$(n,m)$};
\end{tikzpicture}
    \caption{The partition function $\Zcal^{\rm stat}_{\ell}(n,m)$ is the sum of paths as shown in the figure. Here we have set $k=4$, $\ell=3$, $n=m=12$. Points in the gray area have weight $1$. Points in the blue (resp. red) area have weights distributed as $\Gammainv(\theta-\lambda)$ (resp. $\Gammainv(\lambda)$). All other weights have weights distributed as $\Gammainv(\theta)$. In the triangular region with a darker gray shade, all vertices must belong to a path.}
    \label{fig:Zstat}
\end{figure}
This condition that paths are bottom-left packed is equivalent to imposing that paths go through a maximal amount of points inside the $k\times k$ bottom left corner. We will see why this condition is needed below.

As in \eqref{eq:defsmallZ}, we define
$z^{\rm stat}_{m,1}(n) = \Zcal^{\rm stat}_1(n,m)$ and for $\ell\ge 2$, $z^{\rm stat}_{m,\ell}(n) = \Zcal^{\rm stat}_{\ell}(n,m)/\Zcal^{\rm stat}_{\ell-1}(n,m)$, so that 
\begin{equation}
    \Zcal^{\rm stat}_{\ell}(n,m) =  z^{\rm stat}_{m,1}(n) \dots z^{\rm stat}_{m,\ell}(n).
    \label{eq:defsmallZstat}
\end{equation}  
We also define horizontal/vertical increments ratios as usual by 
\begin{equation}
    u_{\ell}^{\rm stat}(n,m)=\frac{z^{\rm stat}_{\ell}(n,m)}{z^{\rm stat}_{\ell}(n-1,m)}, \;\;\; v_{\ell}^{\rm stat}(n,m)=\frac{z^{\rm stat}_{\ell}(n,m)}{z^{\rm stat}_{\ell}(n,m-1)}
    \label{eq:defuvstat}
\end{equation}
\end{definition}

\begin{proposition}
\label{prop:stationarymodel}
Fix any down-right path in the quadrant $\mathbb Z_{>0}^2$. The increments ratios $u^{\rm stat}_{\ell}(n,m)$ and $v^{\rm stat}_{\ell}(n,m)$ along that downright path are all independent for $1\le \ell\le k$, and distributed as 
$$ u^{\rm stat}_{\ell}\sim \mathrm{Gamma}^{-1}(\lambda), \;\;  v^{\rm stat}_{\ell}\sim \mathrm{Gamma}^{-1}(\theta- \lambda).$$
\end{proposition}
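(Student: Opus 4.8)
The plan is to deduce Proposition~\ref{prop:stationarymodel} from Proposition~\ref{prop:COSZstationary} via the geometric RSK recurrence~\eqref{eq:recurrenceRSK}, which (\cite[Proposition 2.5]{corwin2014tropical}) governs the evolution of arrays of non-intersecting polymer partition functions. The role of the all-ones $k\times k$ corner and of the bottom-left packed constraint in Definition~\ref{def:stationarymodel} is precisely to realize, in a limiting sense, the invariant measure of Proposition~\ref{prop:COSZstationary} as ratios of honest partition functions, and the proof is essentially the verification of this.

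First I would check that, since every weight inside the corner equals $1$, the bottom-left packed constraint only prescribes how the corner is swept, so that for $(n,m)$ beyond the corner the array $(z^{\rm stat}_{m,\ell}(n))_{1\le\ell\le k}$ is obtained by running~\eqref{eq:recurrenceRSK} with the weights~\eqref{eq:stationaryweights}, initialized by the explicit (essentially deterministic) data these dynamics produce along the upper-right boundary of the corner. Second --- the main step --- I would identify that boundary data with the invariant distribution of Proposition~\ref{prop:COSZstationary} for the degenerate parameter choice $\alpha_i=\beta_j=0$ for $i,j\le k$ and $\alpha_i=\lambda$, $\beta_j=\theta-\lambda$ for $i,j>k$, for which $w_{i,j}\sim\Gammainv(\alpha_i+\beta_j)$ reproduces~\eqref{eq:stationaryweights} away from the corner and the invariant marginals $\Gammainv(\beta_m-\beta_\ell)$, $\Gammainv(\alpha_n+\beta_\ell)$ collapse to $\Gammainv(\theta-\lambda)$ and $\Gammainv(\lambda)$ for all $1\le\ell\le k$. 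For $\ell=1$ this is immediate and instructive: the corner being swept by a weight-$1$ monotone path, the increments along its two edges are $u^{\rm stat}_1(n,k)=w_{n,k}$ (for $n>k$) and $v^{\rm stat}_1(k,m)=w_{k,m}$ (for $m>k$), hence i.i.d.\ $\Gammainv(\lambda)$ and $\Gammainv(\theta-\lambda)$ --- exactly Sepp\"al\"ainen's stationary boundary data --- so that~\eqref{eq:recurrenceforincrements} and Lemma~\ref{lem:seppalainenstationary} propagate them. For $1<\ell\le k$ I would obtain the same identification either by a limiting argument (take $\beta_1<\dots<\beta_k$ and $\alpha_1,\dots,\alpha_k$ in the ordinary log-gamma polymer and let them all tend to $0$; after dividing out a deterministic leading factor the partition functions converge to the bottom-left packed ones of Definition~\ref{def:stationarymodel}, while the invariant marginals of Proposition~\ref{prop:COSZstationary} converge as above), or directly, level by level in $\ell$, applying~\eqref{eq:recurrenceforincrementsgeneral} and Lemma~\ref{lem:seppalainenstationary}, which amounts to re-running the proof of Proposition~\ref{prop:COSZstationary}. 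Third, once the data along a down-right path hugging the corner is known to be invariant, Proposition~\ref{prop:COSZstationary} (its Markov and invariance content) immediately yields that along any down-right path beyond the corner the ratios $u^{\rm stat}_\ell(n,m)$, $v^{\rm stat}_\ell(n,m)$, $1\le\ell\le k$, are independent with $u^{\rm stat}_\ell\sim\Gammainv(\lambda)$ and $v^{\rm stat}_\ell\sim\Gammainv(\theta-\lambda)$, which is the claim.

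The step I expect to be genuinely delicate is the second one. Along the limiting route the corner weights degenerate and the sum over corner configurations crystallizes onto a maximal family of paths; one must pin down the right normalization and check that the renormalized partition functions genuinely converge to the uniformly weighted bottom-left packed sums of Definition~\ref{def:stationarymodel}, consistently with the exit data fed into~\eqref{eq:recurrenceRSK}. Along the direct route the obstacle is the inductive bookkeeping: one must verify that at each level $\ell\le k$ the auxiliary variables $a_{m,\ell}$ in~\eqref{eq:recurrenceRSK} just beyond the corner are i.i.d.\ $\Gammainv(\theta)$ and independent of the level-$\ell$ past --- exactly the property that makes Lemma~\ref{lem:seppalainenstationary} applicable and the heart of the argument in \cite{corwin2014tropical}.
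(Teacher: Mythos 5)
Your proposal is correct and, in its ``limiting route,'' is essentially the paper's own proof: the paper sets $\alpha_i=\beta_i=i\eps$ in the $k\times k$ corner, applies Proposition~\ref{prop:COSZstationary} in the quadrant $\mathbb Z_{\geq k}^2$ (noting the $v$-ratio law only matches the invariant one up to $O(\eps)$), and shows that the suitably renormalized inhomogeneous partition functions converge to the bottom-left packed ones of Definition~\ref{def:stationarymodel}, so the $u,v$ ratio distributions transfer in the $\eps\to 0$ limit. Your only imprecision --- literally taking $\alpha_i=\beta_j=0$ in the corner, which Proposition~\ref{prop:COSZstationary} does not allow --- is already cured by the limiting argument you describe, exactly as in the paper.
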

\begin{proof}
Let us consider the log-gamma polymer model as in the previous section where $w_{i,j}\sim\mathrm{Gamma}^{-1}(\alpha_i+\beta_j)$. Let us  set $\alpha_i=\beta_i=i\eps$ for $1\le i\le k$, $\alpha_i=\lambda$ for $i>k$, and $\beta_j=\theta-\lambda$ for $j>k$ and call $\Zcal_{\ell}^{\rm \eps-stat}$ the partition functions as defined in \eqref{eq:partitionfunctionell}. On the one hand, we claim that as $\eps$ goes to zero, for $n,m>k$, we have the convergence in distribution 
\begin{equation}
     \eps^{\ell(2k-\ell)} \prod_{i=1}^\ell \frac{(2k-i+1)!}{i!} \Zcal_{\ell}^{\rm \eps-stat}(n,m) \xRightarrow[\eps \to 0]{} \Zcal_{\ell}^{\rm stat}(n,m),
     \label{eq:limitZepsstat}
\end{equation}
where $\Zcal_{\ell}^{\rm stat}(n,m)$ is defined in \eqref{eq:defZcalstat}. Indeed, it is clear that outside of the $k\times k$ bottom-left corner, the distribution of weights in $\Zcal_{\ell}^{\rm \eps-stat}(n,m)$ converges to the distribution of weights in $\Zcal_{\ell}^{\rm stat}(n,m)$. In the the $k\times k$ bottom-left corner, the weights in $\Zcal_{\ell}^{\rm \eps-stat}(n,m)$ are however diverging, but one has that $\epsilon (i+j) w_{i,j}$ converges to $1$. As $\epsilon$ goes to zero, the sum over all non-intersecting  paths in $\Zcal_{\ell}^{\rm \eps-stat}(n,m)$ can be approximated by the sum over non-intersecting paths that collect a maximal number of diverging weights $w_{i,j}$ for $1\leq i,j\leq k$. This is why we had imposed the condition that paths are bottom-left packed in the definition of $\Zcal_{\ell}^{\rm stat}(n,m)$ above. Moreover, the prefactor in \eqref{eq:limitZepsstat} is the correct factor to normalize the diverging weights (it does not depend on the choice of paths, as long as paths collect a maximal number diverging weights in the $k\times k$ bottom-left corner).

On the other hand, we may use Proposition \ref{prop:COSZstationary} in the quadrant $\mathbb Z_{\geq k}^2$. Along the $k$th row, if we restrict the partition function $Z_{\ell}^{\rm \eps-stat}$ to a sum over paths that collect a maximal number of diverging weights, the distribution of $u$ ratios is such that   $u_{k,\ell}(n)\sim\Gammainv(\lambda+(k-\ell+1)\epsilon)$ (for $n>k$). Similarly, we find that the distribution of $v$ ratios along the $k$-th column is such that  $v_{m,\ell}(k)\sim\Gammainv(\theta-\lambda+(k-\ell+1)\epsilon)$ (for $m>k$). This distribution of $v$ ratios is not exactly the one that is stationary according to Proposition \ref{prop:COSZstationary}, however, they become identical as $\eps$ go to zero. This implies that along any down right path in the quadrant $\mathbb Z_{\geq k}^2$, and up to a $O(\eps)$ error, the distribution of $u,v$ ratios for the partition function $\Zcal_{\ell}^{\rm \eps-stat}$ is the same as in the statement of Proposition \ref{prop:stationarymodel}. Finally, \eqref{eq:limitZepsstat} implies that the $u,v$ ratios of $\Zcal_{\ell}^{\rm \eps-stat}$ and $\Zcal_{\ell}^{\rm stat}$ have the same distribution as $\eps$ goes to zero, which implies the statement of the Proposition. 
\end{proof}

Now, fix some $k\geq 1$. For a family of integers $k\leq p_1<\dots<p_\ell$, we will define the partition function (slightly overloading notation) 
\begin{equation}
 \Zcal^{stat}_{\ell}(\mathbf p,m)  = \ZZ_{\ell}((1,1), \dots, (\ell,1) \vert  (p_1,m), \dots, (p_{\ell},m)) 
 \label{eq:defZstatp}
\end{equation}
where the weights have been chosen as in the stationary partition function, i.e.  as in \eqref{eq:stationaryweights}, and the paths must satisfy the same conditions as in Definition \ref{eq:defZcalstat} (see Fig. \ref{fig:Zstatmultipoint}).  \begin{figure}
    \centering
    \begin{tikzpicture}[scale=0.6]
    \filldraw[fill=black!5!, draw=gray] (.8,.8) rectangle (4.2,4.2);
    \filldraw[fill=black!15!, draw=black] (1.5,4.2) -- (4.2,4.2) -- (4.2,1.5) -- cycle;
    \filldraw[fill=blue!10!, draw=gray] (.8,4.8) rectangle (4.2,7.2);
    \filldraw[fill=red!10!, draw=gray] (4.8,.8) rectangle (12.2,4.2);
    \draw[thick] (1,1) -- (1,4) -- (2,4) -- (2,5) -- (4,5) -- (4,7);
    \draw[thick] (2,1) -- (2,3) --  (3,3) -- (3,4) -- (6,4) -- (6,5) -- (7,5) -- (7,7)-- (8,7);
    \draw[thick] (3,1) -- (4,1) -- (4,2) -- (4,3) -- (7,3) -- (7,4) -- (10,4) -- (10,5) -- (11,5) -- (11,7) ;
    \foreach \x in {1,...,4}
    {\foreach \y in {1,...,4}
    \filldraw[gray](\x,\y) circle(0.1);
    }
    \foreach \x in {1,...,4}
    {\foreach \y in {5,...,7}
    \fill[blue](\x,\y) circle(0.1);
    }
    \foreach \x in {5,...,12}
    {\foreach \y in {1,...,4}
    \fill[red](\x,\y) circle(0.1);
    }
    \foreach \x in {5,...,12}
    {\foreach \y in {5,...,7}
    \fill(\x,\y) circle(0.1);
    }
    \draw (0.8,0.3) node{$(1,1)$};
        \draw (4,7.6) node{$(p_1,m)$};
    \draw (8,7.6) node{$(p_2,m)$};
    \draw (11,7.6) node{$(p_3,m)$};

\end{tikzpicture}
    \caption{The partition function $\Zcal^{\rm stat}_{\ell}(\mathbf p,m)$ is the sum of paths as shown in the figure. Here we have set $\mathbf p=(4,8,11)$, $k=4$, $\ell=3$, $m=7$. Points in various colors have the same weights as in Figure \ref{fig:Zstat}. In the triangular region with a darker gray shade, all vertices must belong to one path. Here, by convention, path start from $(1,1), \dots, (\ell,1)$, although they could start anywhere in the gray region provided the gray triangle is full, since the weights there are equal to $1$. }
    \label{fig:Zstatmultipoint}
\end{figure}
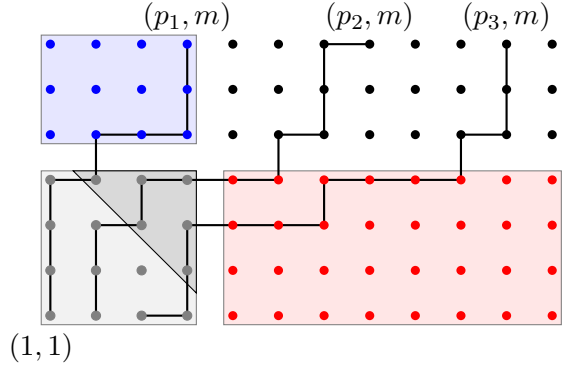
Note that the starting points of polymers are now $(1,1), \dots, (\ell,1) $ instead of $(1,1), \dots, (1, \ell)$ as in \eqref{eq:defZcalstat}. This has no influence on the partition function, since all weights of the form $(i,j)$ with $i+j\leq \ell+1$ must belong to a path in both cases, due to the non-intersection condition. 

\begin{proposition}
Fix $k\geq 1$. 
The family  of partition function ratios 
\begin{equation}  \left\lbrace  \frac{\Zcal^{stat}_{\ell}(\mathbf p,m)}{ \Zcal^{stat}_{\ell}(\boldsymbol{\delta},m)} \right\rbrace_{\footnotesize \begin{matrix} 1\leq \ell\leq k\\ p_1<\dots<p_\ell \end{matrix}},
\label{eq:ratiomultipoint}
\end{equation}
where $\boldsymbol{\delta} =(1,2,\dots, k)$, has a joint distribution which does not depend on $m$ for $m\geq k$. In particular, 
\begin{equation}
    \frac{\Zcal^{stat}_{\ell}(\mathbf p,m)}{ \Zcal^{stat}_{\ell}(\boldsymbol{\delta},m)} \overset{(d)}{=} \mathbf Z[(k,k), \dots, (k,k-\ell+1) \vert (p_1,k),\dots, (p_\ell,k)],
    \label{eq:stationarymultipoint}
\end{equation}
where the partition function $\mathbf Z$ is defined with respect to weights which are i.i.d. and $\mathrm{Gamma}^{-1}(\lambda)$ distributed (see Fig. \ref{fig:Zstatmultipoint2}). 
\label{prop:log-gammaarbitrarymultipoint}
\end{proposition}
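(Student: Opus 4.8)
The plan is to realize each ratio in \eqref{eq:ratiomultipoint} as a deterministic function, crucially \emph{independent of $m$}, of the horizontal increment ratios $u^{\rm stat}_{\ell'}(q,m)$ lying on the horizontal line at height $m$, and then to invoke the stationarity furnished by Proposition~\ref{prop:stationarymodel}. As in the proof of Proposition~\ref{prop:stationarymodel}, I would first work in the $\eps$-regularized model, where the geometric RSK recurrence \eqref{eq:recurrenceRSK} is literally available and where, by \eqref{eq:limitZepsstat}, the normalizing prefactor is independent of $\mathbf p$ and hence cancels from the quotient; it then suffices to argue for $\Zcal_{\ell}^{\rm \eps-stat}$ and send $\eps\to 0$ at the end.

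The combinatorial heart of the argument is a branching (Pieri-type) expansion of the multipoint partition function in terms of the geometric RSK array: for $\mathbf p=(p_1<\dots<p_\ell)$, $\Zcal_{\ell}(\mathbf p,m)$ is a sum over interlacing arrays (Gelfand--Tsetlin patterns) with top row $\mathbf p$, each summand being a monomial in the ratios $z_{m,\ell'}(q)/z_{m,\ell'}(q-1)=u_{\ell'}(q,m)$ with $1\le\ell'\le\ell$ and $q$ in the range $k<q\le p_\ell$. This is the positive-temperature counterpart of the flagged-tableau description of non-intersecting last-passage values; it can be read off from the geometric RSK structure recalled in Section~\ref{sec:geometricRSK}, or obtained directly by induction on $\ell$ and on $p_\ell$ from \eqref{eq:recurrenceRSK} (after transposing the lattice, so that the relevant staircase endpoints become aligned vertically), the intermediate objects being partition functions whose endpoints are no longer on a single row. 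The reference $\Zcal_{\ell}(\boldsymbol{\delta},m)$ is the same expansion with the minimal top row --- the endpoint configuration for which the paths stay inside the weight-$1$ corner --- so that in the quotient the corner data and the vertical increments cancel, leaving $\Zcal^{\rm stat}_{\ell}(\mathbf p,m)/\Zcal^{\rm stat}_{\ell}(\boldsymbol{\delta},m)=G_{\ell,\mathbf p}\bigl(\{u^{\rm stat}_{\ell'}(q,m):1\le\ell'\le\ell,\ k<q\le p_\ell\}\bigr)$ for a fixed map $G_{\ell,\mathbf p}$ not depending on $m$. The case $\ell=1$, where $G_{1,p_1}$ is simply $\prod_{q=k+1}^{p_1}u^{\rm stat}_1(q,m)$, already exhibits the mechanism.

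With this in hand I would apply Proposition~\ref{prop:stationarymodel}: the segment $\{(q,m):k<q\le P\}$ with $P=\max_{\mathbf p}p_\ell$ is a down-right path, so the variables $\{u^{\rm stat}_{\ell'}(q,m):1\le\ell'\le k,\ k<q\le P\}$ are mutually independent, each $\Gammainv(\lambda)$-distributed, with a joint law independent of $m$. Feeding this single family into the $m$-independent maps $G_{\ell,\mathbf p}$, jointly over all admissible $(\ell,\mathbf p)$, shows that the joint law of the family \eqref{eq:ratiomultipoint} does not depend on $m\ge k$. Specializing to $m=k$ then gives the explicit description: there the endpoints $(p_i,k)$ lie on the top row of the red region, so the $\ell$ paths never enter the blue or bulk regions; they cross the weight-$1$ corner in a bottom-left-packed manner and then run through the red region, whose weights are i.i.d.\ $\Gammainv(\lambda)$. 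Inspecting the packing constraint, one sees that the corner contributes only a combinatorial constant --- the very one appearing in $\Zcal^{\rm stat}_{\ell}(\boldsymbol{\delta},k)$, for which the paths never leave the corner --- and that the paths are forced to exit the corner along the vertical segment $(k,k),(k,k-1),\dots,(k,k-\ell+1)$; hence the quotient at $m=k$ is exactly $\mathbf Z[(k,k),\dots,(k,k-\ell+1)\vert(p_1,k),\dots,(p_\ell,k)]$ with i.i.d.\ $\Gammainv(\lambda)$ weights, which is \eqref{eq:stationarymultipoint}.

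The main obstacle is the branching identity of the second paragraph --- writing the bottom-left-packed multipoint partition function through the geometric RSK array and, above all, verifying that after dividing by the reference the result depends only on the horizontal increments at height $m$, i.e.\ that the vertical increments and the corner, blue and bulk weights all drop out. The $\eps$-regularization reduces this to the plain log-gamma model of Section~\ref{sec:geometricRSK}, where it is a (somewhat technical) consequence of the recurrence \eqref{eq:recurrenceRSK}; the remaining ingredients are the stationarity statement of Proposition~\ref{prop:stationarymodel} and the elementary corner combinatorics at $m=k$.
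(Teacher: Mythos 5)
Your proposal follows essentially the same route as the paper: work in the $\eps$-regularized model, express the multipoint ratios as $m$-independent functions of the row-$m$ data, invoke Proposition \ref{prop:stationarymodel} for the $m$-independence along the down-right path at height $m$, and read off \eqref{eq:stationarymultipoint} by specializing to $m=k$. The one point of divergence is that the ``branching identity'' you single out as the main obstacle --- writing the bottom-left-packed multipoint partition function as a sum over Gelfand--Tsetlin-type configurations of monomials in the horizontal ratios $u^{\rm stat}_{\ell'}(q,m)$ --- is precisely the invariance result, Theorem 1.1 of \cite{corwin2020invariance}, which the paper simply cites (in the form: $\Zcal^{\eps-stat}_{\ell}(\mathbf p,m)$ is a deterministic, suitably homogeneous, function of the family $\lbrace \Zcal^{\rm stat}_{\ell}(n,m)\rbrace_{n\geq \ell}$, so that ratios of the former become functions of ratios of the latter); rederiving it by induction on the gRSK recurrence \eqref{eq:recurrenceRSK}, as you sketch, would amount to reproving that theorem, so citing it is the natural way to close the only unfinished step in your argument.
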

\begin{figure}
    \centering
    \begin{tikzpicture}[scale=0.6]
    \filldraw[fill=black!5!, draw=gray] (.8,.8) rectangle (4.2,4.2);
    \filldraw[fill=black!15!, draw=black] (1.5,4.2) -- (4.2,4.2) -- (4.2,1.5) -- cycle;
    \filldraw[fill=red!10!, draw=gray] (4.8,.8) rectangle (18.2,4.2);
    \draw[thick] (1,1) -- (1,4) -- (2,4) -- (7,4);
    \draw[thick]   (2,1) -- (2,3) -- (3,3) -- (9,3) -- (9,4) --  (12,4);
    \draw[thick] (3,1) -- (4,1) -- (4,2) -- (11,2) -- (11,3) -- (14,3) -- (14,4) -- (16,4) ;
    \foreach \x in {1,...,4}
    {\foreach \y in {1,...,4}
    \filldraw[gray](\x,\y) circle(0.1);
    }
    \foreach \x in {5,...,17}
    {\foreach \y in {1,...,4}
    \fill[red](\x,\y) circle(0.1);
    }
    \draw (0.8,0.3) node{$(1,1)$};
        \draw (7,4.6) node{$(p_1,k)$};
    \draw (12,4.6) node{$(p_2,k)$};
    \draw (16,4.6) node{$(p_3,k)$};

\end{tikzpicture}
    \caption{The ratio of partition functions \eqref{eq:ratiomultipoint} have the same distribution as the partition function shown in the picture, which corresponds to the case $m=k$.   Here we have set $\mathbf p=(7,12,16)$, $k=4$, $\ell=3$. Since the weights in the gray region have weight $1$, the paths may start from the vertices $(k,k), \dots, (k,1)$ as in the definition of the partition function $\mathbf Z[(k,k), \dots, (k,1) \vert (p_1,k),\dots, (p_k,k)]$ in \eqref{eq:stationarymultipoint}.}
    \label{fig:Zstatmultipoint2}
\end{figure}
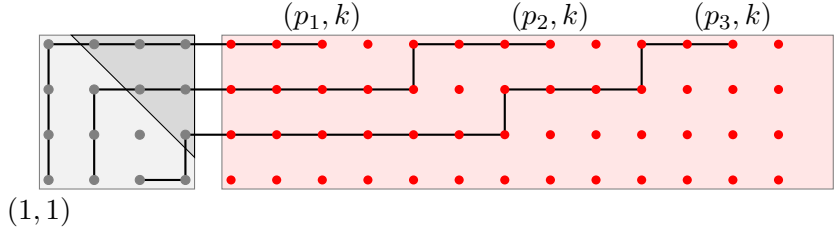
\begin{proof}
Fixed $m\geq k$ and consider the model $\Zcal^{\eps-stat}$, as in the proof of Proposition \ref{prop:stationarymodel}.  For any $\mathbf p$, the partition function $\Zcal^{\eps-stat}_{\ell}(\mathbf p,m)$ can be written as a deterministic function of the family of partition functions $\lbrace \Zcal^{\rm stat}_\ell(n,m)\rbrace_{n\geq \ell}$. The specific relation is a bit involved, and it is  given explicitly as Theorem 1.1 in \cite{corwin2020invariance}. What matters to us here is that this property implies in turn that the ratios \eqref{eq:ratiomultipoint} can be written as functions of ratios of the form $\Zcal^{\rm stat}_\ell(n_1,m)/\Zcal^{\rm stat}_\ell(n_2,m)$. By Proposition \ref{prop:stationarymodel}, the joint distribution of the latter partition function ratios does not depend on $m$. Thus, the joint distribution of the ratios \eqref{eq:ratiomultipoint} does not depend on $m$ neither. 
\end{proof}
\begin{remark}
In \cite{barraquand2021steady}, stationary measures for the (single polymer) partition function of the log-gamma polymer in a half-quadrant was obtained using the symmetry of its distribution with respect to permuting inhomogeneity parameters. A similar argument can be carried out in the present context. Indeed, in the partition function $\Zcal_{\ell}^{\rm \eps-stat}(n,m)$ introduced in the proof of Proposition \ref{prop:stationarymodel}, one may exchange the order of parameters $\alpha_i, \beta_j$ without changing the distribution of $\Zcal_{\ell}^{\rm \eps-stat}(n,m)$ (this is a consequence of more general results from \cite{barraquand2020halfMacdonald}). In particular, exchanging the parameters $\lbrace \beta_i\rbrace_{1\leq i\leq k}$ with parameters  $\lbrace \beta_{m-k+i}\rbrace_{1\leq i\leq k}$ would yield another proof of Proposition \ref{prop:log-gammaarbitrarymultipoint} without appealing to Proposition \ref{prop:COSZstationary} (which comes from properties of the RSK algorithm studied in \cite{corwin2014tropical}).
\end{remark}
\subsection{Interpretation in terms of very long polymers} 
\label{sec:sendingwithanangle}

The distribution of ratios of partition functions observed in stationary models are the same as the ratios that arise in models with homogeneous parameters for very long polymers (for the case of single polymers, this was rigorously proved in \cite{georgiou2013ratios}).

Let us start with the case $\ell=1$. We know that if for all $i,j>1$, $w_{i,j}\sim\Gammainv(\theta)$ and $w_{1,j}\sim\Gammainv(\theta-\lambda)$, $w_{i,1}\sim\Gammainv(\lambda)$, the ratios of partition functions are given by $u\sim \Gammainv(\lambda), v\sim\Gammainv(\theta-\lambda)$ \cite{seppalainen2012scaling}. The same distribution of ratios is obtained in a model with homogeneous weigths $w_{i,j}\sim\Gammainv(\theta)$ (where $i,j\in \mathbb Z$) if the starting point is sent to $\infty$ in the direction \cite[Theorem 4.1]{georgiou2013ratios}
 \begin{equation}
-(\Psi_1(\theta-\lambda), \Psi_1(\lambda)).
\label{eq:direction}
 \end{equation}  
More concretely, let us consider a polymer model as in Section \ref{sec:notations} with weights $w_{i,j}\sim \mathrm{Gamma}^{-1}(\theta)$ for any $i,j\in \mathbb Z$ and recall that we denote its partition function by $\mathbf Z_1[\mathbf x\vert \mathbf y]$. Let us also consider the stationary model from Definition \ref{def:stationarymodel} with $k=1$, whose partition function is denoted by $\Zcal_{1}^{\rm stat}(n,m)$. The claim above states that as $\mathbf x=-N(\Psi_1(\rho-\lambda), \Psi_1(\lambda)) $ and $N$ goes to infinity, the distribution of increment ratios 
\begin{equation}
    u_{m,1}(n)=\frac{\mathbf Z_1[\mathbf x \vert (n,m)]}{\mathbf Z_1[\mathbf x\vert (n-1,m)]}\;\;  \text{ and }  \;\; v_{m,1}(n) = \frac{\mathbf Z_1[\mathbf x\vert (n,m)]}{\mathbf Z_1[\mathbf x\vert (n,m-1)]}
\end{equation}
become the same as increment ratios of the partition functions $\Zcal_{1}^{\rm stat}(n,m)$, that is 
\begin{equation}
     u_{m,1}(n) \sim\Gammainv(\lambda),\;\;\;\; v_{m,1}(n)\sim\Gammainv(\theta-\lambda),
\end{equation}
(and this remains true jointly for $n,m$ in some bounded domain). 
\begin{remark}
This can be also interpreted using the fact that in the partition function $\mathcal Z_{1}^{\rm stat}(n,m)$, if one sends parameter $(n,m)$ to $\infty$ in the direction $(\Psi_1(\rho-\lambda), \Psi_1(\lambda))$, a simple computation, using the known explicit value of the free energy, shows that the optimal path will detach from the first row at a location such that the detaching point and the arrival point form an angle as defined by the vector \eqref{eq:direction}. See similar arguments in \cite{baik2005phase, barraquand2020halfMacdonald, krajenbrink2021tilted}.
\end{remark}

More generally, for  several  non-intersecting polymers, starting at positions 
 \begin{equation}
\mathbf x_{\ell}=(-N\Psi_1(\theta-\lambda_{\ell})+1, -N\Psi_1(\lambda_{\ell}) +\ell), \;\;\;1\leq \ell\leq k,
\label{eq:directiongeneral}
\end{equation} 
we may define ratios $u_{m\ell}(n), v_{m,\ell}(n)$ (similarly as before) by 
\begin{align}
    u_{m,1}(n) \dots u_{m\ell}(n) &= \frac{\mathbf Z [\mathbf x_1, \dots, \mathbf x_{\ell} \vert (n,m-\ell+1), \dots, (n,m)]}{\mathbf Z [\mathbf x_1, \dots, \mathbf x_{\ell} \vert (n-1,m-\ell+1), \dots, (n-1,m)]}\\
    v_{m,1}(n) \dots v_{m\ell}(n) &=\frac{\mathbf Z [\mathbf x_1, \dots, \mathbf x_{\ell} \vert (n,m-\ell+1), \dots, (n,m)]}{\mathbf Z [\mathbf x_1, \dots, \mathbf x_{\ell} \vert (n,m-\ell), \dots, (n-1,m-1)]}.
\end{align}
Then as $N$ goes to infinity, we expect that 
the increment ratios will be distributed as  
\begin{equation} u_{m,\ell}(n) \sim\Gammainv(\lambda),\;\;\;\; v_{m,\ell}(n)\sim\Gammainv(\theta-\lambda),\label{eq:stationarydirectiongeneral}
\end{equation} 
and all independent for $1\leq \ell \leq k$ and $(n,m)$ varying along a downright path in the lattice.

\section{KPZ limit}
\label{sec:KPZlimit}
In this Section, we will take a discrete to continuous limit to deduce our main results, stated in Section \ref{sec:mainresults}, from the results about the log-gamma polymer obtained  in Section \ref{sec:discretecase}. 

\subsection{Convergence of discrete (non-intersecting) directed polymers to continuous limits}
\label{sec:discretetoKPZ}

Discrete directed polymers such as discussed in Section \ref{sec:discretecase} converge to the KPZ equation as the temperature parameter $\theta$ goes to $\infty$ simultaneously with the length of the polymer paths. This was first proved in \cite{alberts2012intermediate} for directed polymer models with general weight distribution, under mild assumptions. The specific case of the log-gamma directed polymer was discussed in \cite{corwin2017intermediate}, where it is also proved that partition function for non-intersecting discrete polymers converge to the O'Connell-Warren multilayer stochastic heat equation (see more details below). This convergence has been reused in many works to deduce results about the KPZ equation from results about the log-gamma polymer. Hence, we will simply state the needed convergence results,  adopting notations close to \cite{barraquand2020stochastic, barraquand2020half, barraquand2022stationary}. 

\bigskip

Let us start with the case $\ell=1$. Consider a discrete polymer model as in Section \ref{sec:notations} with i.i.d. weights distributed as $w_{i,j}\sim\mathrm{Gamma}^{-1}(\theta)$. As $\mathbb E[w_{i,j}]=\frac{1}{\theta -1}$, we will let $\theta=1+2\sqrt{n}$ and multiply all weights by $\sqrt{n}$ to normalize their expectation to $1/2$. Then, we need to scale appropriately the starting and ending points so that discrete up-right paths become Brownian paths. Hence, for $s<t$ and $x,y\in \mathbb R$, we scale the starting and ending points  as 
\begin{equation}\mathbf x^{(n)} = (1+sn/2+x\sqrt{n},1+sn/2), \quad\quad \mathbf y^{(n)}= (tn/2+y\sqrt{n},tn/2), \end{equation} 
and denote by $L(\mathbf x^{(n)}, \mathbf y^{(n)}) = 1+(t-s)n+(y-x)\sqrt{n} $ the number of vertices along a lattice path from $\mathbf x^{(n)}$ to $\mathbf y^{(n)}$. Then we have the convergence in distribution 
\begin{equation}
    \sqrt{n}^{L(\mathbf x^{(n)}, \mathbf y^{(n)})} \mathbf Z_1[\mathbf x^{(n)} \vert \mathbf y^{(n)} ] \xRightarrow[n\to\infty]{} Z(x,s\vert y,t)
    \label{eq:convsinglepolymers}
\end{equation}
where $Z(x,s\vert y,t)$ is defined in \eqref{eq:defpointopointcontinuousDP}. Such convergence result was initially obtained in \cite{alberts2012intermediate}. 

\bigskip  

Assume now that $\ell$ is arbitrary and consider $\vec x=x_1 < x_2<\dots <x_\ell$ and $\vec y=y_1<y_2<\dots < y_\ell$.  We scale starting and ending points as 
\begin{equation}\mathbf x_i^{(n)} = (1+sn/2+x_i\sqrt{n},1+sn/2), \quad\quad \mathbf y_i^{(n)}= (tn/2+y_i\sqrt{n},tn/2), \quad \quad 1\leq i\leq \ell.\end{equation} 
Then, we have the convergence in distribution 
\begin{equation}
    \sqrt{n}^{\sum_{i=1}^{\ell} L(\mathbf x_i^{(n)}, \mathbf y_i^{(n)})} \;\mathbf Z_{\ell}[\mathbf x_1^{(n)}, \dots,\mathbf x_{\ell}^{(n)} \vert \mathbf y_1^{(n)}, \dots,  \mathbf y_{\ell}^{(n)}] \xRightarrow[n\to\infty]{} Z(\vec x,s\vert \vec y,t)
\end{equation}
where $Z(\vec x,s\vert \vec y,t)$ was defined in \eqref{eq:defZl}. This is simply a consequence of the convergence \eqref{eq:convsinglepolymers} along with the Karlin-McGregor formulas \eqref{eq:defZKarlinMcGregor} and \eqref{eq:defpartitionfunctiongeneral} (for continuous and discrete polymers respectively).

\bigskip 
Consider now the partition function $\Zcal_{\ell}(n,m)$ defined in \eqref{eq:partitionfunctionell}, where the $\ell$ non-intersecting polymers start and end from neighbouring locations. It is proved in \cite[Theorem 1.8]{corwin2017intermediate} that these partition functions converge to the O'Connell-Warren multilayer stochastic heat equation. More precisely, for  $\mathbf x^{(n)} = (tn/2+\ell+x\sqrt{n}, tn/2+\ell)$, we have the convergence in distribution 
 \begin{equation}
    \frac{\sqrt{n}^{\ell L((1,\ell),\mathbf x^{(n)})}\Zcal_{\ell}(tn/2+\ell+x\sqrt{n}, tn/2+\ell) }{\left(\frac{tn+x\sqrt{n}}{2}\right)^{-\ell^2/2}\prod_{j=0}^\ell j!} \xRightarrow[n\to\infty]{} M_{\ell}(t, \vec 0, x\vec 1),
    \label{eq:convO'ConnellWarren}
 \end{equation}
 where $M_{\ell}(t, \vec 0, x\vec 1)$ was defined in Section \ref{sec:method} and also denoted $Z_{\ell}(0,0\vert x,t)$. Note that we have slightly extended the result of \cite[Theorem 1.8]{corwin2017intermediate} which correspond to the case $x=0, t=2$.

\subsection{Invariant measures of the O'Connell-Warren multilayer SHE}
\label{sec:invariantOConnellWarren}
Recall the definition of ratios $R_{\ell}(t,x,y)$ from \eqref{eq:defratiosRell}. We will also more generally denote by $\left( R_{\ell}(y,t)\right)_{1\le \ell\le k}$ a process having the same Markovian evolution, with some unspecified initial condition. Proposition \ref{prop:stationarymodel} and the convergence from \eqref{eq:convO'ConnellWarren} yield invariant measures for this Markovian evolution.

More precisely, ratios $R_{\ell}(t,x,y)/R_{\ell}(t,x,z)$ are given by scaling limits of discrete partition function ratios 
$$ \frac{\mathcal Z_{\ell}(n,m)}{\mathcal Z_{\ell-1}(n,m)},$$
under the scalings given in \eqref{eq:convO'ConnellWarren}. 
Consider now the stationary model $\mathcal Z_\ell^{\rm stat}(n,m)$ from Definition \ref{def:stationarymodel}. 
Using the definition of increment ratios  \eqref{eq:defuvstat}, we may write
$$ \frac{\mathcal Z_{\ell}^{\rm stat}(tn/2+\ell + x\sqrt{n},tn/2+\ell)}{\mathcal Z^{\rm stat}_{\ell-1}(tn/2+\ell + x\sqrt{n},tn/2+\ell)} =  \frac{\mathcal Z^{\rm stat}_{\ell}(tn/2+\ell ,tn/2+\ell)}{\mathcal Z^{\rm stat}_{\ell-1}(tn/2+\ell ,tn/2+\ell)} \prod_{k=1}^{x\sqrt{n}} \frac{u_{\ell}^{\rm stat}(tn/2+\ell+k,tn/2+\ell)}{u_{\ell-1}^{\rm stat}(tn/2+\ell+k,tn/2+\ell)}.$$
Using Proposition  \ref{prop:stationarymodel} we know that the $u_{\ell}^{\rm stat}$ variables appearing in the equation above are all independent and distributed as inverse Gamma random variables with parameter $\lambda$.  
It is easy to see that, under the same scalings as in \eqref{eq:convO'ConnellWarren}, a multiplicative random walks with inverse Gamma distributed increments with parameter $\lambda$ converges to exponentials of Brownian motions with the same drift (for that one needs to scale $\lambda$ as $\lambda=\frac 1 2 + \sqrt{n} - \nu$ for some constant $\nu\in \mathbb R$ which will be the drift of the Brownian motions).

Hence,  we obtain that if at time $t=0$, $\left( R_{\ell}(y,0)\right)_{1\le \ell\le k}$ is given by a collection of independent exponentials of  Brownian motions with the same drift, then, for any time $t>0$, the processes $\left( R_{\ell}(y,t)/R_{\ell}(0,t)\right)_{1\le \ell\le k}$ are also distributed as independent exponentials of Brownian motions with the same drift.

We expect that these invariant measures can be observed in the local fluctuations of the fields at large time. More precisely, for all $\ell\ge 1$, we expect that 
\begin{equation}
\lim_{t\to\infty} \frac{R_{\ell}(t, 0, y) }{R_{\ell}(t, 0, 0)}\overset{(d)}{=} e^{B_\ell(y)},
\label{eq:stationarymeasureforR}
\end{equation}
where the (two-sided) Brownian motions $B_\ell(y)$ are independent and have drift $0$. Moreover, we expect that the result remains true for any fixed starting points $\vec x$ (not necesarily equal). In this case, the result needs to be reformulated in terms of $M_{\ell}(t,\vec x, \vec y)$ and reads 
 \begin{equation}
\lim_{t\to\infty} \frac{\frac{M_{\ell}(t, \vec x, y\vec 1)}{M_{\ell-1}(t, \vec x, y\vec 1)} }{\frac{M_{\ell}(t, \vec x, \vec 0)}{M_{\ell-1}(t, \vec x, \vec 0)}}\overset{(d)}{=} e^{B_\ell(y)}. 
\label{eq:stationarymeasureforM}
\end{equation}

\subsection{Partition function at arbitrary arrival points (first derivation of the main result)} 

In this section we will establish the main result of this paper, that is equation \eqref{eq:multipolymers}. We will show that \eqref{eq:multipolymers} can be deduced from \eqref{eq:stationarymeasureforR} using results from \cite{o2016multi}. For simplicity, we start with the case $\ell=2$. Using \cite[Proposition 6.3]{o2016multi}, we have 
\begin{equation}
\frac{M_2(t,(x,x),(y_1,y_2))}{M_1(t,x,y_1)M_1(t,x,y_2)} = \frac{1}{y_2-y_1}\int_{y_1}^{y_2} \frac{M_2(t,(x,x),(z,z))}{M_1(t,x,z)^2}\mathrm d z.
\label{eq:arbitrarypointsfromsame}
\end{equation}
Under the stationary measure (or, equivalently, as $t$ goes to infinity, see \eqref{eq:stationarymeasureforM}), we expect that for any fixed $x\in \R$, 
$$\frac{M_2(t,(x,x),(z,z))}{M_1(t,x,z)} = e^{B_2(z)} \frac{M_2(t,(x,x),(0,0))}{M_1(t,x,0)} \;\;\;\text{ and }\;\;\; M_1(t,x,z)=  e^{B_1(z)}M_1(t,x,0),$$
where $B_1, B_2$ are independent Brownian motions. 
Thus, for fixed $x$, we have the following equality in distribution between processes in variables $y_1,y_2$ with $y_1\le y_2$,
\begin{equation}
\lim_{t\to+\infty}\frac{M_2(t,(x,x),(y_1,y_2))}{M_2(t,(x,x),(0,0))} = \frac{e^{B_1(y_1)+B_1(y_2)}}{y_2-y_1}\int_{y_1}^{y_2} e^{B_2(z)-B_1(z)}\mathrm d z.
\label{eq:densityarrivalell=2}
\end{equation}

We now turn to the general case.  
Using \cite[Theorem 3.4]{o2016multi}, we obtain that for $y_1\le \dots\le y_\ell$, fixed $x\in \R$, $t>0$, we have 
\begin{equation}
M_\ell(t, x\vec 1, \vec y) = \frac{\prod_{j=1}^{\ell-1}j!}{\Delta(\vec y)}\prod_{i=1}^\ell M_1(t,x,y_i) \int_{GT(\vec y)} \prod_{k=1}^{\ell-1} \prod_{i=1}^{\ell-k} \frac{M_{k+1}(t, x\vec 1, z_i^{\ell-k}\vec 1)/M_{k}(t, x\vec 1, z_i^{\ell-k}\vec 1)}{M_{k}(t, x\vec 1, z_i^{\ell-k}\vec 1)/M_{k-1}(t, x\vec 1, z_i^{\ell-k}\vec 1)}\mathrm d z_i^{\ell-k}
\label{eq:theorem3.4}
\end{equation} 
where  $GT(\vec y)$ is the set of vectors $z^{1}, z^2, \dots , z^{\ell-1}\in \R\times \R^2\times  \dots\times \R^{\ell-1}$ such that $z^{1}\prec \dots \prec z^{\ell-1}\prec \vec y$ (where $\prec$ means interlacing) and by convention $M_0=1$. In order to match our notations with \cite[Theorem 3.4]{o2016multi} we have used the same notation for $M_n$ (at least when coordinates are distinct),  and the determinant appearing in the right-hand side of \eqref{eq:relationMndeterminant} is denoted $\tau_n(t,x,y)$ in \cite{o2016multi}. It seems to us that there is a minor mistake about the constants appearing in \cite{o2016multi}: there seems to be a constant missing in \cite[Eq. (39)]{o2016multi} to ensure continuity of $M_n$, and this missing constant would also be missing in the left-hand side of \cite[Theorem 3.4]{o2016multi}. It was taken into account in the statement \eqref{eq:theorem3.4} above. 

Using that under the stationary measure  (or, equivalently, as starting points are sent to infinity, that is $t\to +\infty$), 
$$ R_{k}(t, x, z) = R_{k}(t, x, 0)e^{B_k(z)},$$
where the $B_k$ are independent Brownian motions, or equivalently writing 
$$ \frac{M_{k}(t, x\vec 1, z\vec 1)}{M_{k-1}(t, x\vec 1, z\vec 1)} = \frac{M_{k}(t, x\vec 1, \vec 0)}{M_{k-1}(t, x\vec 1, \vec 0)}e^{B_k(z)},$$
we arrive at 
\begin{equation}
\lim_{t\to+\infty} \frac{M_\ell(t, x\vec 1, \vec y)}{M_\ell(t, x\vec 1, \vec 0)} = \frac{\prod_{j=1}^{\ell-1}j!}{\Delta(\vec y)}\prod_{i=1}^\ell e^{B_1(y_i)} \int_{GT(\vec y)} \prod_{k=1}^{\ell-1} \prod_{i=1}^{\ell-k} e^{
B_{k+1}(z_i^{\ell-k})-B_k(z_i^{\ell-k})}\mathrm d z_i^{\ell-k}.
\label{eq:densityarrivalell}
\end{equation} 
This leads to 
\begin{equation}
    Z^{\rm stat}_{\ell}(\vec y) =\int_{GT(\vec y)} \prod_{k=1}^{\ell-1} \prod_{i=1}^{\ell-k} e^{
B_{k+1}(z_i^{\ell-k})-B_k(z_i^{\ell-k})}\mathrm d z_i^{\ell-k}.
	\label{eq:defZstatGuillaume}
\end{equation}
Reordering the products, we obtain 
\begin{equation}
Z^{\rm stat}_\ell(\vec y)= \int_{GT(\vec y)} \prod_{k=1}^{\ell} \prod_{i=1}^{k} e^{
	 B_{\ell-k+1}(z_i^{k})- B_{\ell-k+1}(z_{i-1}^{k-1})}\prod_{k=1}^{\ell-1} \prod_{i=1}^{k} \mathrm d z_i^{k},
	 \label{eq:defZstatproof}
\end{equation}
as in \eqref{eq:defZstat}.
\begin{remark}
Averaging \eqref{eq:densityarrivalell} over the Brownian motions, we find, after simplifications valid for $0\le y_1\le \dots\le y_\ell$, that the right hand side equals $e^{\frac{1}{2}\sum_{i=1}^{\ell}y_i}$. However, it would be more interesting to perform this averaging after normalization. 
\end{remark}

\subsection{The O'Connell-Yor semi-discrete polymer} 
\label{sec:OYdetails}
It will be useful to recall the definition of the O'Connell-Yor \cite{o2001brownian} semi-discrete directed polymer. Consider independent standard Brownian motions $B_1, B_2,\dots$. Starting and ending point in the semi-discrete polymer will be points of the form  $(x,n)\in \mathbb R\times \mathbb Z$.  
For real numbers $x<y$ and integers $ i\geq j$, we define the partition function  
\begin{equation}
    Z^{B}_1[ (x,i)\vert (y,j)]= \int_{x=z^{0}<z^{1}<\dots<z^{i-j}<z^{i-j+1}=y} \prod_{r=j}^i e^{B_{r}(z^{i-r+1})-B_r(z^{i-r})} \prod_{r=j}^{i-1} dz^{i-r}.
    \label{eq:defZOY}
\end{equation}
The partition function $Z^{B}_1[ (x,i)\vert (y,j)]$ was denoted $Z^{\rm OY}[ (x,i)\vert (y,j)]$ in Section \ref{sec:mainresults}. 
It can be seen as an integral over semi-discrete  up-right paths from $ (x,i) $ to $(y,j)$ (see Fig. \ref{fig:OConnellYor} (left)) of the exponential of the derivative of the Brownian motions, integrated along the path. We recall that the horizontal lines are indexed from top to bottom. More explicitly, it can be rewritten as an integral over non-increasing semi-discrete paths $\pi:[x,y]\to \lbrace j,\dots, i \rbrace$ such that $\pi(x)=i, \pi(y)=j$,  
\begin{equation}
    Z^{B}_1[ (x,i)\vert (y,j)]= \int_{\pi(x)=i}^{\pi(y)=j}  e^{\int_x^y dB_{\pi(t)}(t)} \mathcal D\pi.
    \label{eq:defZOYpath}
\end{equation}

We further  define partition functions for $\ell$ non-intersecting paths. For $i_1\leq \dots \leq i_{\ell}$, $j_1\leq \dots \leq j_\ell$, $x_1\leq \dots\leq x_\ell$, $y_1\leq \dots\leq y_\ell$, we define the partition function 
\begin{equation}
    Z^{B}_{\ell}[(x_1,i_1),\dots, (x_{\ell}, i_{\ell}) \vert  (y_1,j_1),\dots, (y_{\ell}, j_{\ell}) ]= \int  e^{\sum_{k=1}^{\ell} \int_{x_k}^{y_k}  dB_{\pi_k(t)}(t) }   \prod_{k=1}^{\ell}  \mathcal D\pi_k,
    \label{eq:defZOYmulti}
\end{equation}
where the integration is performed over non-intersecting non-increasing semi-discrete paths $\pi_1, \dots, \pi_{\ell}$ such that $\pi_k:[x_k,y_k]\to \lbrace j_k, \dots, i_k\rbrace$ with $\pi_k(x_k)=i_k$ and $\pi_k(y_k)=j_k$. 
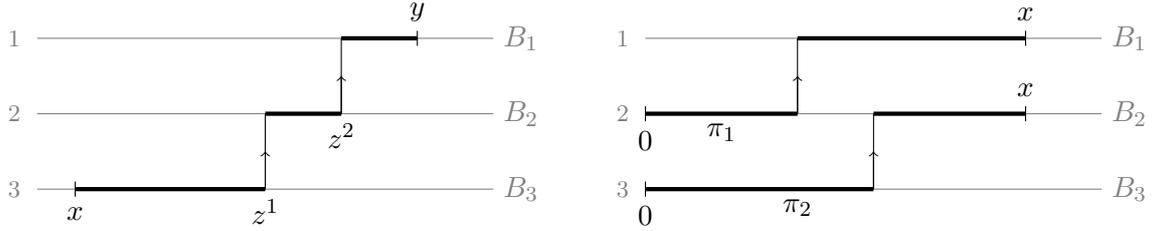
\begin{figure}
	\centering
	\begin{tikzpicture}
		\begin{scope}
\draw[gray] (-0.3,2) node{\footnotesize $1$};
\draw[gray] (-0.3,1) node{\footnotesize $2$};
\draw[gray] (-0.3,0) node{\footnotesize $3$};
\draw[gray] (0,0) -- (6,0) node[anchor=west] {$B_3$};
\draw[gray] (0,1) -- (6,1) node[anchor=west] {$B_2$};
\draw[gray] (0,2) -- (6,2) node[anchor=west] {$B_1$};
\draw (3,0) node[anchor=north] {$z^1$};
\draw (4,1) node[anchor=north] {$z^2$};
\draw (5,1.9)--(5,2.1) node[anchor=south] {$y$};

\draw[->] (3,0) -- (3,0.5);
\draw (3,0) -- (3,1);
\draw[->] (4,1) -- (4,1.5);
\draw (4,1) -- (4,2);
\draw[ultra thick] (0.5,0) -- (3,0);
\draw[ultra thick] (3,1) -- (4,1);
\draw[ultra thick] (4,2) -- (5,2);
\draw (0.5,-0.1) node[anchor=north] {$x$} -- (0.5,0.1);
\end{scope} 

	\begin{scope}[xshift=8cm]
\draw[gray] (-0.3,2) node{\footnotesize $1$};
\draw[gray] (-0.3,1) node{\footnotesize $2$};
\draw[gray] (-0.3,0) node{\footnotesize $3$};
\draw[gray] (0,0) -- (6,0) node[anchor=west] {$B_3$};
\draw[gray] (0,1) -- (6,1) node[anchor=west] {$B_2$};
\draw[gray] (0,2) -- (6,2) node[anchor=west] {$B_1$};
\draw (0,-0.1) node[anchor=north] {$0$} -- (0,0.1);
\draw (0,0.9) node[anchor=north] {$0$} -- (0,1.1);
\draw (2,0) node[anchor=north] {$\pi_2$};
\draw (1,1) node[anchor=north] {$\pi_1$};
\draw (5,0.9)--(5,1.1) node[anchor=south] {$x$};
\draw (5,1.9) -- (5,2.1) node[anchor=south] {$x$};
\draw[->] (3,0) -- (3,0.5);
\draw (3,0) -- (3,1);
\draw[->] (2,1) -- (2,1.5);
\draw (2,1) -- (2,2);
\draw[ultra thick] (0,0) -- (3,0);
\draw[ultra thick] (0,1) -- (2,1);
\draw[ultra thick] (3,1) -- (5,1);
\draw[ultra thick] (2,2) -- (5,2);
\end{scope} 
\end{tikzpicture}
\caption{A graphical interpretation of the O'Connell-Yor semi-discrete polymer partition function. Left: One polymer from $(x,i)$ to $(y,j)$ with $i=3$ and $j=1$. Right: Two non-intersecting semi-discrete polymers from $(0,2)$ to $(x,1)$ and from $(0,3)$ to $(x,1)$, whose partition function equals $\exp(\mathcal WB_1(x)+ \mathcal WB_2(x))$ defined in \eqref{eq:defWBi}.} 
\label{fig:OConnellYor}
\end{figure}
Hence, the partition function $Z^{\rm stat}_{\ell}(\vec y)$ defined in \eqref{eq:defZstatGuillaume} can be rewritten as a partition function over $\ell$ non-intersecting semi-discrete polymers as 
\begin{equation}
     Z^{\rm stat}_{\ell}(\vec y) = Z^{B}_{\ell}[(0,1),\dots,(0,\ell) \vert (y_1,1), \dots, (y_{\ell},1)].
     \label{eq:relationZstatavecZOY}
\end{equation}
Further, by the Karlin-McGregor formula, this can be rewritten as 
\begin{equation}
     Z^{\rm stat}_{\ell}(\vec y) = \det\left( Z^{B}_1[(0,i) \vert (y_j,1)] \right)_{i,j=1}^{\ell}.
\end{equation}

\subsection{Partition function at arbitrary arrival points (second derivation of the main result)} 
\label{sec:alternativemethod}
In this Section, we show that we can rederive the result \eqref{eq:multipolymers} directly from Proposition \ref{prop:log-gammaarbitrarymultipoint}. Consider the partition function $\Zcal^{stat}_{\ell}(\mathbf p,m)$ defined in \eqref{eq:defZstatp}. As in Section \ref{sec:discretetoKPZ}, we scale $m=tn/2+\ell$ and $p_i= tn/2+x_i\sqrt{n}$.  Let us first consider the initial condition.  When $t=0$, we are considering the partition function $\Zcal^{stat}_{\ell}(\mathbf p,\ell)$, which  converges at large scale to $Z_{\ell}^{\rm stat}(\vec x)$, as defined in \eqref{eq:defZstat} (compare Figure \ref{fig:Zstatmultipoint2} with Figure \ref{fig:OY}). Hence, using the convergence \eqref{eq:convO'ConnellWarren},   for arbitrary $t$,  the partition function $\Zcal^{stat}_{\ell}(\mathbf p,m)$, under appropriate renormalization,  converges (jointly in $\vec x$) to the process 
\begin{equation} 
\int_{\mathbb W_{\ell}} Z_{\ell}^{\rm stat}(\vec y) Z_{\ell}(\vec y,0\vert \vec x,t) d\vec y. 
\label{eq:defZstatinitial}
\end{equation} 
Then, taking the large $n$ limit in the statement of Proposition \ref{prop:log-gammaarbitrarymultipoint}, we obtain that spatial ratios of  \eqref{eq:defZstatinitial} are distributed, for any $t$, as spatial ratios of the process $Z_{\ell}^{\rm stat}(\vec y)$. Letting $t$ to infinity, this yields another derivation of \eqref{eq:multipolymers}. 
 In other terms, the process $Z_{\ell}^{\rm stat}(\vec y)$ is a stationary measure for the Markov process defined by the stochastic PDE \eqref{eq:SPDE}. Note that we expect that there exists other stationary measures, in particular the processes $Z_{\ell}^{\rm stat}(\vec y; \vec a)$ considered in \eqref{eq:mainresultwithdrifts}. 

\subsection{A remarkable property of the gRSK correspondence}
\label{sec:identity}
Given $\ell$ Brownian motions $B_1, \dots, B_\ell$, let us define processes $\mathcal W B_1(x), \dots, \mathcal WB_\ell(x)$ (we follow notations close to \cite{dauvergne2018directed, corwin2020invariance}) such that 
\begin{equation}
    \mathcal WB_1(x) = \log Z^{B}_1[(0,\ell)\vert (x,1)]
\end{equation}
and for all $1\leq k\leq \ell$, 
\begin{equation}
    \mathcal W B_1(x)+ \dots+  \mathcal W B_{k}(x) = \log Z^{B}_{k}[(0,\ell-k+1),\dots, (0,\ell)\vert  (x,1), \dots, (x,k)].
    \label{eq:defWBi}
\end{equation}
For example, $e^{\mathcal W B_1(x)+\mathcal W B_2(x)}$ corresponds to the partition function of two non-intersecting paths in Fig. \ref{fig:OConnellYor} (right).  We may now define partition functions  $Z_{\ell}^{\mathcal WB}$ analogous to $Z^{B}_{\ell}$ after replacing the Brownian motions $B_i$ by the processes $\mathcal WB_i$ (see Fig. \ref{fig:newpartitionfunction}), that is we define 
\begin{equation}
    Z^{\mathcal W B}_{\ell}[(x_1,i_1),\dots, (x_{\ell}, i_{\ell}) \vert  (y_1,j_1),\dots, (y_{\ell}, j_{\ell}) ]= \int e^{\sum_{k=1}^{\ell} \int_{x_k}^{y_k} d\mathcal WB_{\pi_k(t)}(t) }  \prod_{k=1}^{\ell} \mathcal D\pi_k, 
\end{equation}
where we integrate over non-intersecting paths exactly  as in \eqref{eq:defZOYmulti}. We also define the short-hand notation 
\begin{equation}
    Z^{\mathcal WB}_{\ell}(\vec y) = Z^{\mathcal W B}_{\ell}[ (0,1), \dots, (0,\ell) \vert (y_1,1), \dots (y_\ell,1)].
\end{equation}
Then, we have the surprising identity 
\begin{equation}
    Z^{\mathcal WB}_{\ell}(\vec y) = Z^{\rm stat}_\ell(\vec y). 
    \label{eq:surprisingidentity}
\end{equation}
This is a consequence of \cite[Theorem 1.1]{corwin2020invariance} (after taking a straightforward limit to transform discrete polymers in semi-discrete ones). This is also a positive temperature analogue of \cite[Prop 4.1]{dauvergne2018directed}.  While $Z^{\rm stat}$ is a simpler model, the partition function $Z^{\mathcal W B}$ may be easier to understand in some cases as the geometry of optimizing paths is simpler. In particular, this observation was crucial in the rigorous construction of the Airy sheet \cite{dauvergne2018directed}. This type of identity goes back to the work of  \cite{biane2005littelmann} (see the discussion in \cite{corwin2020invariance}). 

\begin{figure}
	\centering
	\begin{tikzpicture}
	\begin{scope}
\draw[gray] (-0.3,2) node{\footnotesize $1$};
\draw[gray] (-0.3,1) node{\footnotesize $2$};
\draw[gray] (-0.3,0) node{\footnotesize $3$};
\draw[gray] (0,0) -- (7,0) node[anchor=west] {$\mathcal W B_3$};
\draw[gray] (0,1) -- (7,1) node[anchor=west] {$\mathcal W B_2$};
\draw[gray] (0,2) -- (7,2) node[anchor=west] {$\mathcal W B_1$};
\draw (0.5,2) node[anchor=north] {$\pi_1$};
\draw (0.5,1) node[anchor=north] {$\pi_2$};
\draw (0.5,0) node[anchor=north] {$\pi_3$};
\draw (5.3,0) node[anchor=north] {$z_1^1$};
\draw (3.6,1) node[anchor=north] {$z_1^2$};
\draw (5.7,1) node[anchor=north] {$z_2^2$};
\draw (4,1.9)-- (4,2.1) node[anchor=south] {$y_2$};
\draw (6,1.9) -- (6,2.1) node[anchor=south] {$y_3$};
\draw (1,1.9) -- (1,2.1) node[anchor=south] {$y_1$};
\draw[->] (5.3,0) -- (5.3,0.5);
\draw (5.3,0) -- (5.3,1);
\draw[->] (3.6,1) -- (3.6,1.5);
\draw (3.6,1) -- (3.6,2);
\draw[->] (5.7,1) -- (5.7,1.5);
\draw (5.7,1) -- (5.7,2);
\draw[ultra thick] (0,0) -- (5.3,0);
\draw[ultra thick] (0,1) -- (3.6,1);
\draw[ultra thick] (5.3,1) -- (5.7,1);
\draw[ultra thick] (0,2) -- (1,2);
\draw[ultra thick] (3.6,2) -- (4,2);
\draw[ultra thick] (5.7,2) -- (6,2);
\draw (0,-0.1) node[anchor=north] {$0$} -- (0,0.1);
\end{scope} 
	\end{tikzpicture}
\caption{The definition of the partition function $Z^{\mathcal W B}(\vec y)$ in terms of non-intersecting paths with fixed endpoints.
The definition is analogous to the definition of $Z^{\rm stat}(\vec y)$ in Figure \ref{fig:OY} after replacing the Brownian motions $B_1, \dots, B_{\ell}$ by the processes $\mathcal WB_1, \dots, \mathcal WB_{\ell}$ defined in \eqref{eq:defWBi}. Surprisingly, $Z^{\rm stat}(\vec y)$ and $Z^{\mathcal W B}(\vec y)$ are equal (see \eqref{eq:surprisingidentity})}
\label{fig:newpartitionfunction}
\end{figure}
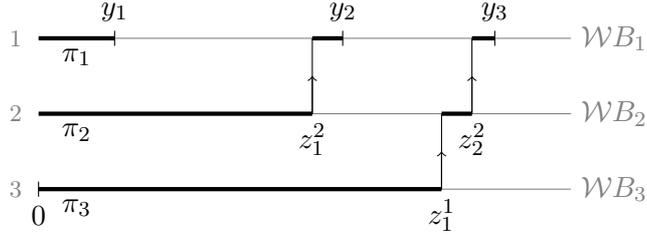

\subsection{Large-scale limit} 

In the limit where all points $y_1$ go to infinity and are well separated, we obtain an alternative expression for the partition function $Z_{\ell}^{\rm stat}$ in terms of Dyson Brownian motion.  First, consider the large scale limit of the process $\mathcal W B_1, \dots, \mathcal W B_{\ell}$. As $y\to+\infty$, one knows that
\begin{equation} 
\left (\frac{1}{\sqrt{x}} \mathcal W B_k(xy) \right)_{1\leq k\leq \ell, y>0} \xRightarrow[x\to\infty]{(d)} \left( \Lambda_k(y) \right)_{1\leq k\leq \ell, y>0}
\label{eq:zerotemperature}
\end{equation}  
where $\Lambda_1(y) > \Lambda_2(y) > \dots > \Lambda_\ell(y) $ are the ordered eigenvalues of a GUE($\ell$)
random matrix $H$ with measure $\sim \exp(-\frac{1}{2 y} {\rm Tr} H^2)$, and as a process in the variable  $y$ perform the Dyson Brownian motion. This convergence \eqref{eq:zerotemperature} holds jointly for $1\leq k\leq \ell$ and as a process in the variable  $y$ \cite{o2002representation} (see also \cite{o2002random}). It was first discovered in \cite{baryshnikov2001gues, gravner2001limit} in the special case $y=1,k=1$.  
The process $\Lambda_1, \dots, \Lambda_{\ell}$ can also be seen as $\ell$ Brownian motions started at $\Lambda_j(0)=0$ and conditioned not to intersect for all $y>0$.

Now, we consider the limit $\frac{1}{\sqrt{x}} \log Z^{\mathcal W B}_{\ell}(x\vec y)$ as $x$ goes to infinity. We have that 
\begin{align} 
    \lim_{x\to\infty} \frac{1}{\sqrt{x}} \log Z^{\rm stat}_{\ell}(x\vec y) &= \lim_{x\to\infty} \frac{1}{\sqrt{x}} \log Z^{\mathcal W B}_{\ell}(x\vec y) \\ 
    &= \sup_{z \in GT(\vec y)} \left\lbrace \sum_{k=1}^{\ell}\sum_{i=1}^{k} \Lambda_{\ell-k+1}(z_i^k)-\Lambda_{\ell-k+1}(z_{i-1}^{k-1})  \right\rbrace.
    \label{eq:supremumtosimplify}
\end{align}
where the supremum is over Gelfand-Tsetlin patterns as defined in \eqref{eq:defGT}. 

\appendix

\newpage
\addcontentsline{toc}{section}{Appendix: Maximum and argmax for the $\mathrm{GUE}(2)$ Dyson Brownian motion with drift}
\section*{Appendix: Maximum and Argmax for the $\mathrm{GUE}(2)$ Dyson Brownian motion with drift}

Let us recall that for a Brownian motion with $-1$ drift, i.e. for the process
$B(t)-t$, where $B(t)$ is a standard Brownian with $B(0)=0$, the joint PDF $p(t_m,m)$ of the maximum $m=\max_{t \geq 0} \left\lbrace B(t)-t\right\rbrace$ 
and of the time $t_m= {\rm argmax}_{t \geq 0} \left\lbrace B(t)-t\right\rbrace$ when the maximum is achieved,  is a classical
result : see e.g.  \cite[Chapter IV, item 32]{borodin1996handbook} 
or taking the limit $T \to +\infty$ in \cite{shepp1979joint} in \cite[Eq. (30)]{majumdar2008optimal}. Its Laplace transform reads
\be 
\int_0^{+\infty} dt e^{- s t - a m} p(t,m) = \frac{2}{\sqrt{2 s +1} + a + 1} ,
\ee 
which leads to the PDF of the time when the maximum is achieved as
\be 
p(t) = \int_0^{+\infty} dm \, p(t,m)= \sqrt{\frac{2}{\pi t}} e^{-t/2} - {\rm Erfc}\left(\sqrt{\frac{t}{2}}\right).
\ee 
We used these results recently in \cite{barraquand2021kardar} to obtain the scaled endpoint distribution
for a very long polymer in the bound phase near a wall, and near the unbinding transition
(i.e. using the Brownian half-space stationary measure). 

Here we will obtain the joint PDF $p(t_m,m)$ for the two analog quantities for the $\mathrm{GUE}(N)$ Dyson Brownian motion, which we denote $\mathrm{DBM}(N)$,
in the simplest case of $N=2$, i.e. with two particles. With a slight change of notations from the text, we define 
\be 
m = {\rm max}_{t \in \mathbb{R}_+} ( \Lambda_1(t) - t),  \quad \quad 
t_m = {\rm argmax}_{t \in \mathbb{R}_+} ( \Lambda_1(t) - t) 
\ee 
where $\Lambda_1(t)>\Lambda_2(t)$ is the $\mathrm{DBM}(2)$ defined in the text for $\ell=2$ (normalized to be locally a standard
Brownian motion). As discussed in Section \ref{sec:applications}, the marginal distribution for $t_m$ gives the scaled PDF of the distance between
the endpoints of two very long non-intersecting polymers in full space, when the starting points of polymers paths are sent to infinity with a common angle $b$, in the
limit of a small angle $b \to 0$. We use here an extension of the method used in \cite{majumdar2008optimal} for
the single Brownian problem. Note that some properties of the extrema of the DBM, or of non-crossing Brownians,
have been studied before, see e.g. \cite{borodin2009maximum,rambeau2011distribution,nguyen2017extreme,gautie2019non}, but to our knowledge the observable needed here was not studied.

Let us start by denoting $G_m(x,z,t)$ the probability density that a single Brownian with drift $-1$,
i.e. the process $B(t)-t$, with $B(0)=z$, ends up at $x$ at time $t$ and remains below level $m>z$
during that time, i.e. 
\be 
G_m(x,z,t) dx = \mathbb{P}\left( B(t) - t \in [x,x+dx] \text{ and }  \max_{\tau \in [0,t]} \left\lbrace B(\tau) - \tau \right\rbrace < m \bigg\vert B(0)=z\right).
\ee 
From the images method (see \cite{majumdar2008optimal} for a simple derivation) the explicit 
formula is given by 
\be 
G_m(x,z,t) = \frac{1}{\sqrt{2 \pi  t} } e^{- \frac{t}{2} - (x-z) }
\left( e^{- \frac{(x-z)^2}{2  t} } - e^{- \frac{(2 m -x-z)^2}{2  t} } \right).
\ee 

Let us now consider the process $\Lambda_1(t) -t > \Lambda_2(t) -t > \dots$ i.e. the $\mathrm{DBM}(N)$ 
to which one adds a uniform drift $-1$. Let us denote $P_m^{\rm DBM}(\vec x, \vec z, t)$ the propagator of
this process, i.e. the probability density that if it starts at $\vec z=(z_1>\dots>z_N)$, it will end up 
at $\vec x=(x_1>\dots>x_N)$ and will remain below level $m$ for all times up to $t$. This propagator is given by the formula
\be 
P_m^{\rm DBM}(\vec x, \vec z, t) = \frac{\Delta(\vec x)}{\Delta(\vec z)} \det\left(G_m(x_i,z_j,t)\right)_{i,j=1}^N 
\ee 
For $m=+\infty$ this is a well known formula, and one can see that it extends in presence of a boundary and a uniform drift. 
Integrating over $\vec x$ with $x_1>\dots>x_N$ gives the total probability that the drifted $\mathrm{DBM}(N)$ remains below level $m$.
From now on we restrict to $N=2$, but similar formulae can be derived for any $N$. 

Following similar arguments as in \cite{majumdar2008optimal} one obtains that the joint probability 
density $p(t_m,m)$ for the maximum and the time to the maximum is equal to the small $\epsilon$
limit 
\begin{align} 
 p(t_m,m) &= \lim_{\epsilon \to 0} \frac{c}{\epsilon^2} \lim_{T \to +\infty} p\left(t=t_m,m,x_1=m-\epsilon,T\right) ,\\
p(t,m,x_1,T) &= \int_{m>y_1>y_2} dy_1 dy_2 \int_{x_2<x_1} dx_2 \, 
P_m^{\rm DBM}(\vec x, \vec z, t) \, P_m^{\rm DBM}(\vec y, \vec x, T-t), \label{prod} 
\end{align} 
where $c$ is a normalizing constant.
The large $T$ limit exists and is finite because of the negative drift.

Let us look first at the second factor in \eqref{prod}. We need to perform the integral in variables  $y_{12}:=y_1-y_2$ and $y_{m1}:=m-y_1$, 
each integrated  over $\mathbb{R}_+$. It is convenient to observe that at large $T$,  the double integral is dominated by 
the typical region (in presence of the drift), i.e
$y_{12}=T^{1/2} \tilde y_{12}$, $y_{m1}=T + T^{1/2} \tilde y_{m1}$, where 
$\tilde y_{12}=O(1)$ is integrated over $\mathbb{R}_+$ and $\tilde y_{m1}=O(1)$
integrated over $\mathbb{R}$. Performing these integrals we obtain
\be 
\lim_{T \to +\infty} \int_{m>y_1>y_2} dy_1 dy_2  \, P_m^{\rm DBM}(\vec y, \vec x , T-t) |_{x_1=m - \epsilon}
= \frac{4 \epsilon e^{-x_{12}}}{x_{12}}  (x_{12} \cosh x_{12} - \sinh x_{12}) + o(\epsilon), 
\ee 
where $x_{12}=x_1-x_2>0$. Note that it is independent of both $m$ and $t$. 

Now, the first factor in \eqref{prod} needs to be computed setting
$z_1=\eta$, $z_2=0$ in the limit $\eta \to 0$. One obtains
\be 
\lim_{\eta \to 0^+} P_m^{\rm DBM}(\vec x, (\eta,0) , t) \Big\vert_{x_1=m - \epsilon} \!\!= \frac{\epsilon x_{12}}{\pi  t^3} e^{-\frac{2 x_{12} (m-t)+2
   (m+t)^2+x_{12}^2}{2 t}} \left(e^{\frac{2 m
   x_{12}}{t}} \left(m x_{12}-t\right)+m
   x_{12}+t\right).
\ee 
Putting all together and performing the remaining integral over $x_{12}>0$, considerable simplifications occur 
and we finally obtain (i) the joint PDF $p(t,m)$ for the
maximum and time of maximum and (ii) the PDF $p(t)$ for the time of
maximum (the normalizing constant is found to be equal to $c=1/2$) 
\begin{align}
 p(t,m) &=  \frac{2 \sqrt{\frac{2}{\pi }} e^{-\frac{m^2+4 m
   t+t^2}{2 t}} \left(\left(m^2 (2 t-1)-t^2+t\right)
   \sinh (m)+m \left(m^2+(t-1) t\right) \cosh
   (m)\right)}{t^{5/2}}, \nonumber \\
 p(t) &= \int_0^{+\infty} dm p(t,m) =
2 \left(\text{erf}\left(\frac{\sqrt{t}}{\sqrt{2}}\right)-e^{4 t} (24 t-1) \text{erfc}\left(\frac{3
   \sqrt{t}}{\sqrt{2}}\right)+8 \sqrt{\frac{2}{\pi }}
   e^{-t/2} \sqrt{t}-1\right) \label{pt} .
\end{align}
with $\int_0^{+\infty} dt p(t)=1$. One also finds that the Laplace transform, for $s\geq 0$, has the expression
\be 
\int_0^{+\infty} dt e^{- s t} p(t) = \frac{4 \left(s \left(3 s-10 \sqrt{2 s+1}+22\right)-8
   \sqrt{2 s+1}+8\right)}{(s-4)^2 s \sqrt{2 s+1}}.
\ee 
This leads to the first few moments $\int_0^{+\infty} dt t^p p(t)= \{\frac{5}{4}, \frac{29}{8}, \frac{555}{32}, \frac{3747}{32} \}$ for
$p=1,2,3,4$. One also notes that the double Laplace transform of $p(t,m)$ has a relatively simple form
\be 
\int_0^{+\infty} dt e^{- s t - a m} p(t,m) =
\frac{16 \left(a+\frac{3 s}{\sqrt{2
   s+1}}+\frac{2}{\sqrt{2 s+1}}+2\right)}{\left(2 a
   \sqrt{2 s+1}+(a+2)^2+2 s+4 \sqrt{2 s+1}\right)^2}.
\ee

\begin{footnotesize} 
\bibliographystyle{unsrt}
\bibliography{NonIntersectingPolymers.bib} 
\end{footnotesize}

\end{document}